\documentclass[pdflatex,sn-mathphys-num]{sn-jnl}

\usepackage{float}
\usepackage{graphicx}%
\usepackage{physics}
\usepackage{multirow}%
\usepackage{multicol}%
\usepackage{amsmath,amssymb,amsfonts}%
\usepackage{amsthm}%
\usepackage[title]{appendix}%
\usepackage{xcolor}%
\usepackage{textcomp}%
\usepackage{manyfoot}%
\usepackage{booktabs}%
\usepackage{algorithm}%
\usepackage{algorithmicx}%
\usepackage{algpseudocode}%
\usepackage{listings}%
\usepackage[utf8]{inputenc}
\usepackage[T1]{fontenc}
\usepackage{newtxtext,newtxmath}
\usepackage[scr=rsfso]{mathalfa}
\DeclareMathAlphabet{\mathscr}{U}{rsfso}{m}{n}%
\usepackage{tabularray}
\usepackage{caption}
\usepackage{academicons}
\usepackage{hyperref}
\usepackage{textpos}
\usepackage{tabularray}
\usepackage[justification=centering, singlelinecheck=false]{caption}
\usepackage{comment}

\theoremstyle{thmstyleone}%
\newtheorem{theorem}{Theorem}
\theoremstyle{thmstyletwo}%

\theoremstyle{thmstylethree}%

\begin{document}

\title[Microenvironmental Dynamics of Adenoid Cystic Carcinoma in Lacrimal Gland]
{Modeling the Microenvironmental Dynamics of Adenoid Cystic Carcinoma in Lacrimal Gland with Treatment Intervention }


\author[1]{\fnm{Ayasa} \sur{Akter}}
\author[1]{\fnm{Miss Sajnin Binta Salam} \sur{Efty}}
\author*[1,]{\fnm{Uzzwal Kumar} \sur{Mallick}}\email{mallickuzzwal@math.ku.ac.bd}

\affil[1]{\orgdiv{Mathematics Discipline}, \orgname{Khulna University}, \orgaddress{\street{Khulna-9208}, \country{Bangladesh}}}

\abstract{Adenoid cystic carcinoma (ACC) of the lacrimal gland is a rare and extremely life-threatening malignant tumor with a high risk of recurrence and poor long-term outcomes. On the other hand, due to its complex molecular dynamics and low incidence rate, treatment strategies are very limited. In this study, a mathematical model is developed based on the biological phenomena of the tumor–microenvironment using nonlinear ordinary differential equations. It is analyzed both analytically and numerically. According to the findings, the growth of tumor cells is associated with the depletion of oxygen and nutrients, while severe hypoxia promotes the development of abnormal blood vessels and makes tumor control more difficult. Surgery and postoperative radiotherapy have been considered as treatment strategies, and we have observed that surgery alone may not be sufficient to prevent the regrowth of residual tumor cells. In addition, postoperative radiotherapy provides better control of the residual tumor burden, particularly when the remaining tumor cells have higher radiosensitivity. Furthermore, different radiotherapy interruption scenarios have been investigated, and the results indicate that longer gaps between radiotherapy doses can adversely affect the overall treatment response, leading to increased overall toxicity and a higher number of residual tumor cells at the end of the observation period. Consequently, our results suggest that postoperative radiotherapy after surgery, combined with adequate radiosensitivity of the residual tumor cells and adherence to the prescribed treatment schedule without prolonged interruptions, may reduce the risk of tumor recurrence and potentially improve long-term survival.}

\keywords{Adenoid Cystic Carcinoma; Lacrimal Gland; Hypoxia; Hypovascularity; Microenvironment; Radiosensitivity; Distant metastasis}


\maketitle
\section{Introduction}
Adenoid Cystic Carcinoma is one of the most frequent malignant tumors of the lacrimal gland. The tumor is relatively rare and characterized by its slow-growing but aggressive nature. Though the proliferation rate is slow, it infiltrates into peripheral tissues like orbital structures and the brain. The most frustrating nature of the tumor is metastasis. Mostly, ACC metastasizes into distant organs. For this reason, the long-term survival rate is very unsatisfactory. In terms of clinical characteristics like hypoxia, hypovacularity, perineural invasion and distant metastasis made it life-threatening. Although the tumor can be controlled locally, distant metastasis into the lungs, bones and liver after a certain time is one of the terrifying features. In this case, surgical management along with radiation therapy works well in controlling the tumor locally, but they have no effect on survival rates and metastasis. Targeted therapies such as chemotherapy have little more effectiveness on the tumor.  But due to a hypoxic microenvironment, chemotherapy also loses its efficacy sometimes. Besides, the irregular vasculature structure inhibits drug delivery to the cells. According to \cite{incidence2013}, the incidence rate of ACC was $1.3$ cases per million people from $1974-2007$ in Denmark. And later Powell et al. \cite{incidence2023}  investigated the incidence rate, which is $3-4$ cases per million people worldwide and these studies substantiate the rarity of the tumor. They also found that $5$ years survival rate is less than $50\%$ and in the case of 10 years, it decreases to $20\%$ in the studies \cite{survival1982} and \cite{survival1998}. In addition, Zeidan et al. \cite{liver_metastasis} reported a rare case of metastasis into the liver from Lacrimal gland ACC after $20$ years of removal of the primary tumor. Despite this rarity, the notable mortality rate and recurrence patterns underscore the clinical significance of ACC and the need for further research.\\ 

\noindent Since there are so many organs associated with our eyes that help in their functioning. The lacrimal gland is one of them. It is a compound tubulo-acinar exocrine gland with the key responsibility of secreting aqueous components, electrolytes and protective proteins of the tear film, which protects the ocular surface. It is located in the superolateral part of the eye orbit, above the eyeball, specifically tucked into the lacrimal fossa of the frontal bone. Conceptually, a tumor is a mass of abnormal tissue that proliferates in an uncontrolled manner. In general, tumors are classified into two categories: Benign and Malignant tumors. Benign tumors usually refer to non-cancerous and slow-growing tumors without spreading. In contrast, Malignant tumors are slow-growing, but they continue to grow relentlessly. Mostly, it is cancerous, invades into nearby tissues, spreads to distant organs, and comes back even after treatment. For instance, Pleomorphic adenoma is the most common benign tumor of the lacrimal gland and Adenoid Cystic Carcinoma and Carcinoma ex Pleomorphic Adenoma are malignant tumors.\\

\noindent 
This malignancy progresses gradually and poses various types of symptoms as it spreads. As the tumor gradually enlarges, it exerts pressure on surrounding tissues, leading to persistent orbital pain. In some cases, rapid tumor expansion may cause acute swelling of the upper eyelid and orbital region. With the advancement of the disease, infiltration of extraocular muscles and surrounding connective tissues may result in restricted ocular motility, leading to partial or complete loss of eye movement and diplopia. Also, the continued growth of the tumor mass displaces the eyeball from its normal anatomical position. Consequently, many patients develop proptosis, characterized by a forward displacement of the globe, which is one of the most common and clinically noticeable signs of lacrimal gland malignancy. In advanced stages, involvement of adjacent orbital structures may further exacerbate visual impairment, cause mechanical distortion of the globe and compromise normal ocular function. These progressive symptoms reflect the aggressive and infiltrative nature of lacrimal gland ACC.\\

\noindent 
Over the years, myriad studies have examined how tumor cells grow, their growth pattern, and the main cause and biological behavior of ACC. A modern genetic analysis by Dillon et al. \citep{Dillon2015} investigated that the main cause of ACC is the translocation of chromosome $6:9$, which induces $MYB-NFIB$ gene fusion, leading to the up-regulation of $MYB$ oncoprotein. Also, ACC can be differentiated from other malignancies of Lacrimal gland that is, Pleomorphic Adenoma (PA), by the presence of Proteomes in tears, which helps to detect ACC accurately \citep{Yue2023}. Argyris et al. \cite{PP2013} described the biological behavior of ACC as aggressive in nature and does not transmit highly. Another study by Lin et al \citep{Lin2015} identified ACC as a slow-growing tumor with widespread infiltration into surrounding tissues and with poor clinical outcome, which also penetrates into intracranial structures. Furthermore, a detached cell line named (LACC-1) has been established by them to analyze the growth pattern mathematically. They separated the tumor from the human lacrimal gland and left it growing in a culture and observed the growth pattern of the tumor in terms of a growth curve. But in reality, cells never grow in a curve due to a lack of resources; there are many factors associated with tumor growth that are absent in culture.\\ 

\noindent 
Another important aspect associated with the tumor microenvironment is the Nutrient level. To support growth, tumor cells must extort glucose, lipids and amino acids at a hastening rate from extracellular nutrients, investigated by Finicle et al \cite{Finicle2018}. Ghaffari et al \cite{Ghaffari2015} determined that tumor cells continuously change their metabolic pathways to promote proliferation, metastasis and survival, and therefore they constantly depend on the supply of Nutrients. Koo and Yoon \cite{Koo2015} found the extreme up-regulation of Glutaminase-1 and Amino acid transporter-2, proteins related to metabolism in ACC which confirm glutaminolysis( breakdown of glutamine). Also, the expression of glycine, serine, and threonine metabolism in ACC was comparatively higher than in normal tissues and benign cancer. This indicates that tumor cells utilize those nutrients for growth, suggesting that Nutrients help promote malignancy of ACC \citep{Yang2022}. Lyu et al \cite{Serine2025} determined the role of serine synthesis in tumor growth and it was crucial in tumor onset, advancement and acquired immune response. In addition, extreme serine metabolism transforms the tumor as malignant one and promotes oncogenic transformation, immune suppression and therapeutic resistance in tumor cells. Glutamine has a crucial role in the rapid proliferation and metastasis of tumor cells, and high expression of it is associated with high mortality in the case of breast cancer. Also, consumption of glycine is highly correlated with the proliferation rate of cancer cells from the NCI-60 panel cell line \citep{Jain2015}. These findings highlight that the proliferation of tumor cells is influenced by serine and glycine moreover by Nutrients.\\

\noindent 
Furthermore, the availability of Oxygen has been identified as another critical factor influencing tumor progression. Mendonça et al \cite{Mendonça2020} determined that, due to the advancement of ACC the oxygen level declines, causing intratumoural hypoxia in the tumor microenvironment. Similarly, according to \cite{Nejad2021}, Hypoxia occurs due the fast growth of tumor cell. Notably, it can influence angiogenesis (the process of formation of new blood vessels), aggressiveness and metastasis. Moreover, hypoxia helps to accelerate the survival of tumor, immune suppression and makes tumor as therapy resistant. In addition, Kozal and Krzeslak \cite{Kozal2022} described the role of hypoxia in tumor microenvironment which is responsible for immune suppression, altering metabolic pathways, angiogenesis, cell proliferation, metastasis, etc. Thereafter, how tumor cells become adaptive to hypoxia-induced situations and intolerant to chemotherapy and radiation therapy are investigated by Chen et al \citep{Chen2023}. Consequently, to sustain their uncontrolled proliferation, tumor cells experience Oxygen and Nutrient deprivation. At this stage, the hypoxic environment influences the initiation of angiogenesis to meet the demand of an extended blood supply \citep{Yang2024}. According to a statistical analysis provided by Anjum et al \citep{Anjum2022}, among 30 lacrimal gland ACC patents $HIF-1\alpha$ (Hypoxia-Inducible Factor) and $VEGF$ (Vascular Endothelial Growth Factor) expression were $36.6\%$ and $46.6\%$, respectively. Also, these factors are closely associated with perineural invasion and are accountable for poor clinical outcomes. Another study by Chiche et al \citep{Rouleau2010} determined that, if cancer cells remain in a hypoxic environment for a long time ($3$ days), then they become resistant to that chemo which promotes apoptosis.\\

\noindent 
Surgical management is the most common treatment strategy of ACC. Orbital exenteration is one of them, but it is not self-contained. In most of the cases postoperative radiation therapy is required. Because of perineural invasion both the orbit and the skull base must need radiation therapy. Although this complex surgery combined with radiation therapy lead to a low level recurrence. In terms of most of the patents, tumor develops in distant organs like lungs and bones after $1-2$ years and they eventually die from the disease \citep{surgery}. Due to these limitations, eye-sparing surgery has been introduced as an alternative treatment strategy in recent years. Particularly for patients whose tumors are limited to the orbit. In this process, the tumor is surgically removed while preserving the eye. After that adjuvant radiotherapy is applied to reduce the risk of recurrence. Several studies have demonstrated that this treatment strategy provides comparatively more local control of the tumor. Also, the long-term survival rate is quite satisfactory. In many cases, patients remain disease-free for the rest of their lives \citep{eye_sparcing_surgery2018}. Cosse et al \cite{Cosse2008} investigated how hypoxia renders cancer cell chemo-resistant. They described that, when there exists Oxygen deprivation in any cancer-specific area, then there persists an acidic environment. In that case, drug delivery becomes affected. Secondly, hypoxia reduces the effectiveness of chemotherapy because in the absence of Oxygen, free radicals can not be formed. Lastly, hypoxia changes the genomic sequence and structures of growth-related proteins. As a result, there remains no efficacy of chemotherapy.\\

\noindent 
As the present treatment strategies are not so effective and targeted therapies have not even been introduced yet which can control the tumor totally and influence mortality rates. Thus, considering the aggressiveness and limited treatment strategies of the tumor, a newly proposed mathematical model is developed to demonstrate microenvironmental dynamics of Adenoid Cystic Carcinoma in lacrimal gland. This study shows the key biological features of the tumor and feasible treatment intervention.\\

\noindent 
The paper is organized as follows: In section 2, a biologically grounded mathematical model is constructed based on the dynamics of the key changeable factors of the tumor. An analytical analysis of the model is performed on section 3. This section includes boundedness, existence and uniqueness of model solution, equilibrium points, stability analysis, future status of the tumor and sensitivity Analysis. Section 4 contains numerical simulations and their graphical representations. The treatment intervention is introduced in Section 5. Consequently, key findings of the work are presented in Section 6. 
\section{Mathematical Formulation of the Model}
In this section, we develop a newly proposed mathematical model based on concepts and biological mechanisms discussed in previous literature review about the progression of lacrimal gland ACC. The model incorporates the complex molecular interactions of the tumor microenvironment.
\subsection{Assumptions of the Model}
During the formulation of the proposed mathematical model for lacrimal gland ACC, the following assumptions are considered:
\begin{itemize}
\item All state variables, namely Number of Tumor Cells $T(t)$, Oxygen Level $O(t)$, Nutrient Level $N(t)$ and Blood Vessels $ V(t)$ are assumed to be continuous and differentiable functions of time.
\item The dynamics of the tumor microenvironment are studied over a continuous time period of five years($60$ months).
\item Tumor cells continuously consume Oxygen and Nutrients.
\item All parameters are assumed to remain constant throughout the simulation period.
\end{itemize}
\subsection{Model Description}
Lacrimal gland Adenoid Cystic Carcinoma is a rare but highly aggressive malignant tumor. Despite advances in clinical management, the biological mechanisms underlying ACC progression remain poorly understood due to its rarity and the complex interactions between tumor cells and their surrounding microenvironment. In this chapter, we propose a novel mathematical model to investigate the growth dynamics of ACC by explicitly incorporating key components of the tumor microenvironment.\\

\noindent 
The growth of the tumor is associated with changes in the tumor microenvironment. To demonstrate the situation, we denote the number of Tumor cells as $T(t)$, $O(t)$ is expressed as the Oxygen level also Nutrient level is $N(t)$ and Tumor-induced Abnormal Blood Vessels is denoted by $V(t)$ at any time $t$. The number of tumor cells naturally grows logistically. On the other hand, tumor cells are decreased for immunosuppression $\sigma T,$ and natural apoptosis $\mu T$. Also the presence of Oxygen and nutrients influences the growth of tumor cells.
\begin{equation}
	\frac {dT}{dt}= \alpha T\left(1-\frac{T}{K}\right)+\beta OT+\gamma NT-\sigma T -\mu T
	\label{T'}
\end{equation}
The Nutrient level is escalated by the nutrition supply of stromal-cell $\Lambda_n$ in the tumor. The increase in abnormal blood vessels stimulates the level of nutrients at a constant rate $\gamma_1 $. However, by the influence of hyperplasia Nutrient level drops by $\eta_1 TN$ and the natural diffusion rate of nutrients is $\mu_1$.
\begin{equation}
	\frac {dN}{dt}= \Lambda_n +\gamma_1 VN -\eta_1 T N-\mu_1N
	\label{N'}
\end{equation}
The rise in oxygen level also depends on the increase in  abnormal blood Vessels. On the other hand, extended tumor cells accelerate oxygen consumption which create hypoxia situation. Also, oxygen is naturally diffused at a rate $\mu_2$.
\begin{equation}
	\frac {dO}{dt}=\Lambda_O+\beta_1 VO-\eta_2 T O-\mu_2O
	\label{O'}
\end{equation}
Angiogenesis, formation of irregular vascular structure, is enhanced by tumor-induced hypoxia and nutrient availability, while blood vessels decay due to a natural regression rate $\sigma_1$
\begin{equation}
	\frac {dV}{dt}=\delta(\frac{K_h} {K_h+O}) T+\epsilon (\frac{N}{K_n+N})V-\sigma_1 V
	\label{V'}
\end{equation}
The descriptions of the model parameters used in the proposed system are presented in the following table.
\begin{center}
  \begin{table}[hbt!]
   \caption{Parameter value and their descriptions}

    \begin{tabular}{p{1.3cm} p{6.0cm} p{6.5cm}}
      \hline
      \textbf{Symbol} & \textbf{Parameter's Description} & \textbf{Value}\\
      \hline    
$\alpha$ &  Natural growth rate of tumor cell  & $0.35 month^{-1}$     \\ 
$\beta$    & Tumor cell enhancement rate due to Oxygen  & $0.0001 (mmHg)^{-1} month^{-1}$      \\
$\gamma$   & Tumor cell enhancement rate due to Nutrients & $0.001 month^{-1}mol^{-1}m^3$      \\
$\sigma$  & Rate of immunosuppression  & $0.016 month^{-1}$   \\
$\mu$    &  Natural apoptosis rate of tumor cell    & $0.082month^{-1}$ \\
$\Lambda_n$   & Rate of Nutrient supply by stromal cell   &  $0.5 molm^{-3}month^{-1}$        \\
$\gamma_1$  &  Rate of Nutrient supply by abnormal blood vessels & $0.0005 vessels^{-1}mm^2month^{-1}$     \\
$\eta_1$  
 & Nutrient consumption rate by tumor cells  &  $5 \times 10^{-11}cells^{-1}month^{-1}$     \\
 $\mu_1$    &  Natural regression rate of Nutrients   & $0.05month^{-1}$ \\
 $\Lambda_o$   & Rate of Oxygen supply by healthy vessels &  $2.0 mmHgmonth^{-1}$        \\
$\beta_1$   & Rate of Oxygen supply via  abnormal vessels & $0.001vessels^{-1}mm^2month^{-1}$    \\
$\eta_2$    & Oxygen  consumption rate by tumor cells & $6 \times 10^{-11}cells^{-1}month^{-1}$ \\
$\mu_2$    &  Natural regression rate of Oxygen & $0.04month^{-1}$ \\
$ \delta$  & Angiogenesis stimulation rate by tumor cells &  $5 \times 10^{-11} vesselsmm^{-2} cells^{-1}month^{-1}$\\
$\sigma_1$  & Natural decay rate of vessels & $0.0015 month^{-1}$  \\
$\epsilon$  &  Nutrient induced vascular growth rate  &  $0.01 month^{-1}$        \\

$K$   & Carrying capacity (maximum sustainable tumor cell due to limitation of space and resource) & $12 \times 10^{9} cells$\\
$K_n$   & Saturation constant for nutrient-dependent vascular response& $2molm^{-3}$\\
$K_h$   & Saturation constant for oxygen-dependent vascular response & $40mmHg$\\
      \hline
    \end{tabular}
    \label{table:paramters value}
    \end{table}
  \end{center}

\section{Qualitative Analysis of the Model}
This section demonstrates the qualitative analysis of the newly proposed model, focusing on the 
existence of the system, stability at equilibrium points, different dynamic behaviors, etc.
\subsection{Positivity of State Variables}
\begin{theorem}
   \label{Theorem:Positivity Theorem}\index{Positivity}
Considering $T(0)\ge 0, N(0)\ge0, O(0)\ge 0$ and $V(0)\ge 0$, then $T(t), N(t), O(t)$ and $V(t)$ will be always positive for all $t \in [0,T_f]$ in $\mathbb{R}^{+}_4$ where $T_f>0$. 
\end{theorem}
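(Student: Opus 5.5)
The plan is to treat the four equations \eqref{T'}--\eqref{V'} on the interval $[0,T_f]$ where the solution exists. We use the integrating-factor (variation of constants) form of each equation together with the quasi-positivity of the vector field. The structural point is that the right-hand side $F=(F_T,F_N,F_O,F_V)$ is locally Lipschitz on $\mathbb{R}^4_{+}$: the Michaelis--Menten terms $K_h/(K_h+O)$ and $N/(K_n+N)$ are smooth for $O,N>-\min(K_h,K_n)$. It also satisfies $F_i\ge 0$ whenever the $i$-th variable vanishes and the others are nonnegative. Indeed, at $T=0$ we get $F_T=0$; at $N=0$ we get $F_N=\Lambda_n>0$; at $O=0$ we get $F_O=\Lambda_O>0$; and at $V=0$ we get $F_V=\delta\frac{K_h}{K_h+O}T\ge 0$. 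By the standard invariance result for such systems, $\mathbb{R}^4_{+}$ is forward invariant. To keep the argument self-contained, though, I would carry out the explicit integral representation for each component.

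First, for the tumor equation, I write $\frac{dT}{dt}=T\,\phi(t)$ with $\phi=\alpha(1-T/K)+\beta O+\gamma N-\sigma-\mu$, which is continuous along the solution. This gives
\[
T(t)=T(0)\exp\Big(\int_0^t\phi(s)\,ds\Big)\ge 0,
\]
and $T(t)>0$ whenever $T(0)>0$. Next, for the nutrient equation, I set $\psi_1=\gamma_1V-\eta_1T-\mu_1$, so that $N'-\psi_1N=\Lambda_n$. Multiplying by $\exp(-\int_0^t\psi_1)$ and integrating yields
\[
N(t)=e^{\int_0^t\psi_1}\Big(N(0)+\Lambda_n\int_0^t e^{-\int_0^s\psi_1}\,ds\Big)>0 \quad\text{for } t>0.
\]
The identical argument with $\psi_2=\beta_1V-\eta_2T-\mu_2$ and source $\Lambda_O$ gives $O(t)>0$ for $t>0$.

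The step I expect to be the main obstacle is the vessel equation, since its source term $\delta\frac{K_h}{K_h+O}T$ is only known to be nonnegative once $T\ge0$ and $O>-K_h$ are established. This creates a mild circularity, because the integrating factors for $N$ and $O$ involve $V$. I would handle it by a first-exit-time argument. Suppose some component becomes negative, and let $t^*=\inf\{t: \min(T,N,O,V)(t)<0\}$. On $[0,t^*]$ all components are nonnegative, so $O\ge0$ and the coefficient $\frac{K_h}{K_h+O}$ is well defined and positive. Then
\[
V(t)=e^{\int_0^t\psi_3}\Big(V(0)+\delta\int_0^t \frac{K_h T(s)}{K_h+O(s)}e^{-\int_0^s\psi_3}\,ds\Big)\ge 0,
\]
where $\psi_3=\epsilon\frac{N}{K_n+N}-\sigma_1$. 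The representations for $T,N,O$ show these three are nonnegative as well, and in fact $N,O$ are strictly positive. By continuity this nonnegativity extends a little past $t^*$, which contradicts the definition of $t^*$, so no such $t^*$ exists in $[0,T_f]$.

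Finally, I would remark on the scope of the conclusion. It gives $T,N,O,V\ge0$ on $[0,T_f]$, with strict positivity of $N,O$ for $t>0$. Strict positivity of $T$ and $V$ holds whenever $T(0)>0$, which is the biologically relevant case and the sense of ``positive'' intended in the statement.
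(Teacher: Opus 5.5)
Your argument is correct, but it takes a different route from the paper's. The paper drops the nonnegative source terms ($\Lambda_n$, $\Lambda_O$, and $\delta\frac{K_h}{K_h+O}T$), divides each equation by its own state variable and integrates. This gives lower bounds of the form $N(t)\ge N(0)\exp\bigl(\int_0^t(\gamma_1V-\eta_1T-\mu_1)\,ds\bigr)$. That route is shorter and produces explicit exponential lower bounds, but it has three weaknesses:
\begin{itemize}
\item dividing by $N$, $O$, $V$ presupposes the positivity being proved;
\item the $V$-step invokes $T,O>0$ without ordering the argument;
\item the bounds collapse to $\ge 0$ when an initial value vanishes.
\end{itemize}
The paper avoids the last point only by silently assuming all initial values are strictly positive. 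Your variation-of-constants formulas keep the source terms and never divide by the unknown, so they hold on the whole existence interval. They also give the sharper fact that $N,O>0$ for $t>0$ even when $N(0)=0$ or $O(0)=0$. Finally, they correctly show that the statement as written is too strong: $T(0)=0$ forces $T\equiv0$, so strict positivity of $T$ (and of $V$ when $V(0)=0$) requires $T(0)>0$.

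One simplification: the circularity you anticipate does not actually arise. The integrating factors $e^{\int_0^t\psi_i}$ are positive whatever the signs of $V$ and $T$. Hence your formulas for $T$, $N$, $O$ hold on all of $[0,T_f]$ unconditionally, which gives $T\ge0$ and $O\ge0$ there. The source $\delta K_hT/(K_h+O)$ is then nonnegative, and the $V$ formula gives $V\ge0$ directly, so the first-exit-time argument can be dropped. As phrased, ``by continuity this nonnegativity extends a little past $t^*$'' is also not the right justification. Continuity alone does not keep a component that vanishes at $t^*$ nonnegative afterwards; what does the work is that your representations remain valid beyond $t^*$.
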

\begin{proof}
    Taking all parameters of the system and all initial values to be positive, we establish the non-negativity of the state variables through differential inequalities.
    
\noindent From equation \eqref{T'}, the dynamics of the tumor cells is given by
\begin{align*}
	\frac {dT}{dt} &= \alpha T\left(1-\frac{T}{K}\right)+\beta OT+\gamma NT-\sigma T -\mu T \\
	\frac {dT}{dt} &= \left[ \alpha - \frac{\alpha T}{K} + \beta O + \gamma N - \sigma - \mu \right] T \\
	\Rightarrow \frac{dT}{T} &= \left[ (\alpha - \sigma - \mu) - \frac{\alpha T}{K} + \beta O + \gamma N \right] dt
\end{align*}
Integrating both sides from $0$ to $t$, we get,
\begin{align*}
	\ln \left(\frac{T(t)}{T(0)}\right) &= \int_{0}^{t} \left[ (\alpha - \sigma - \mu) - \frac{\alpha T(\tau)}{K} + \beta O(\tau) + \gamma N(\tau) \right] d\tau \\
	\Rightarrow T(t) &= T(0) e^{ (\alpha - \sigma - \mu)t - \int_{0}^{t}\frac{\alpha T(\tau)}{K}d\tau + \int_{0}^{t}\beta O(\tau)d\tau + \int_{0}^{t}\gamma N(\tau)d\tau } > 0
\end{align*}
Since $T(t) > 0$, we can consider $\eta_1 T = \eta_0$ is always positive.

\medskip
\noindent Next, from equation \eqref{N'}, the nutrient level equation is
\begin{align*}
	\frac {dN}{dt} &= \Lambda_n +\gamma_1 VN -\eta_1 T N-\mu_1N \\
	\frac {dN}{dt} &= \Lambda_n + \left( \gamma_1 V - \eta_0 - \mu_1 \right) N 
\end{align*}
Since $\Lambda_n > 0$, we obtain the following inequality:
\begin{align*}
	\frac {dN}{dt} &\geq \left( \gamma_1 V - \eta_0 - \mu_1 \right) N \\
	\Rightarrow \frac{dN}{N} &\geq \left( \gamma_1 V - \eta_0 - \mu_1 \right) dt
\end{align*}
Integrating both sides from $0$ to $t$ yields,
\begin{align*}
	\ln \left(\frac{N(t)}{N(0)}\right) &\geq \int_{0}^{t} \left( \gamma_1 V(\tau) - \eta_0 - \mu_1 \right) d\tau \\
	\Rightarrow N(t) &\geq N(0) e^{ -(\eta_0 + \mu_1)t + \int_{0}^{t} \gamma_1 V(\tau) d\tau } > 0
\end{align*}

\medskip
\noindent Similarly, from equation \eqref{O'}, the oxygen level dynamics can be expressed by
\begin{align*}
	\frac {dO}{dt} &= \Lambda_O + \beta_1 VO - \eta_2 T O - \mu_2 O \\
	\frac {dO}{dt} &= \Lambda_O + \left( \beta_1 V - \eta_O - \mu_2 \right) O
\end{align*}
Since $\Lambda_O > 0$, dropping the source term gives:
\begin{align*}
	\frac {dO}{dt} &\geq \left( \beta_1 V - \eta_O - \mu_2 \right) O \\
	\Rightarrow \frac{dO}{O} &\geq \left( \beta_1 V - \eta_O - \mu_2 \right) dt
\end{align*}
Integrating both sides from $0$ to $t$, we have,
\begin{align*}
	\ln \left(\frac{O(t)}{O(0)}\right) &\geq \int_{0}^{t} \left( \beta_1 V(\tau) - \eta_O - \mu_2 \right) d\tau \\
	\Rightarrow O(t) &\geq O(0) e^{ -(\eta_O + \mu_2)t + \int_{0}^{t} \beta_1 V(\tau) d\tau } > 0
\end{align*}

\medskip
\noindent Finally, from equation \eqref{V'},
\begin{align*}
	\frac {dV}{dt} &= \delta\left(\frac{K_h} {K_h+O}\right) T+\epsilon \left(\frac{N}{K_n+N}\right)V-\sigma_1 V \\
	\frac {dV}{dt} &= \delta\left(\frac{K_h} {K_h+O}\right) T + \left[ \epsilon \left(\frac{N}{K_n+N}\right) - \sigma_1 \right] V
\end{align*}
Since, $\delta, K_h, O, T > 0$, which leads to
\begin{align*}
	\frac {dV}{dt} &\geq \left[ \epsilon \left(\frac{N}{K_n+N}\right) - \sigma_1 \right] V \\
	\Rightarrow \frac{dV}{V} &\geq \left[ \epsilon \left(\frac{N}{K_n+N}\right) - \sigma_1 \right] dt
\end{align*}
Integrating both sides from $0$ to $t$, we get,
\begin{align*}
	\ln \left(\frac{V(t)}{V(0)}\right) &\geq \int_{0}^{t} \left[ \epsilon \left(\frac{N(\tau)}{K_n+N(\tau)}\right) - \sigma_1 \right] d\tau \\
	\Rightarrow V(t) &\geq V(0) e^{ -\sigma_1 t + \int_{0}^{t} \epsilon \left(\frac{N(\tau)}{K_n+N(\tau)}\right) d\tau} > 0
\end{align*}
Therefore, all the state variables of the proposed model are strictly non-negative for all $t > 0$.
\end{proof}
\subsection{Boundedness of State Variables}
\begin{theorem}
   \label{Theorem:Boundedness Theorem}\index{Boundedness}
The solution trajectories $(T(t), N(t), O(t), V(t))$ of the system \eqref{T'}--\eqref{V'} entering from non-negative initial conditions remain uniformly bounded within a compact region $\Omega$.
\end{theorem}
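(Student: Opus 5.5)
The plan is to build an explicit compact box
\[
\Omega=\{(T,N,O,V)\in\mathbb{R}^4_{+}: T\le T^{*},\ N\le N^{*},\ O\le O^{*},\ V\le V^{*}\}
\]
and show it is positively invariant. I would then check that every trajectory starting from non-negative data enters a box of this type in finite time. Theorem~\ref{Theorem:Positivity Theorem} already keeps trajectories in $\mathbb{R}^4_{+}$, so only the four upper faces need checking. On each face I will show that the corresponding component of the vector field is non-positive, using only the saturations $K_h/(K_h+O)\le 1$ and $N/(K_n+N)\le 1$.

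\textbf{Step 1 ($T$-face).} From \eqref{T'},
\[
T'\le \alpha T\Bigl(1-\frac{T}{K}\Bigr)+(\beta O^{*}+\gamma N^{*})T .
\]
Hence on $T=T^{*}$ we have $T'\le 0$ as soon as
\[
T^{*}\ge K\Bigl(1+\frac{\beta O^{*}+\gamma N^{*}-\sigma-\mu}{\alpha}\Bigr).
\]
I would also use the logistic comparison: $\limsup T\le T^{*}$ for any starting value, which gives the absorbing property in the $T$ direction.

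\textbf{Step 2 ($N$- and $O$-faces).} On $N=N^{*}$, \eqref{N'} gives
\[
N'\le \Lambda_n+(\gamma_1V^{*}-\mu_1)N^{*}.
\]
This is non-positive when $\gamma_1V^{*}<\mu_1$ and $N^{*}\ge \Lambda_n/(\mu_1-\gamma_1V^{*})$. The analogous bound holds for $O$ from \eqref{O'}, with $\beta_1V^{*}<\mu_2$ and $O^{*}\ge\Lambda_O/(\mu_2-\beta_1V^{*})$. The consumption terms $-\eta_1TN$ and $-\eta_2TO$ can only help, so I drop them here.

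\textbf{Step 3 ($V$-face), the main obstacle.} On $V=V^{*}$, \eqref{V'} gives
\[
V'\le \delta T^{*}+\Bigl(\epsilon\frac{N^{*}}{K_n+N^{*}}-\sigma_1\Bigr)V^{*}.
\]
Since $\epsilon>\sigma_1$ for the tabulated values, this requires the nutrient ceiling to satisfy
\[
\epsilon N^{*}/(K_n+N^{*})<\sigma_1, \quad\text{i.e.}\quad N^{*}<\sigma_1K_n/(\epsilon-\sigma_1).
\]
That requirement conflicts with Step~2, which needs $N^{*}$ large. My first attempt to reconcile them would be to exploit the consumption term $-\eta_1TN$. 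Once $T$ is large it drives $N$ down toward $\Lambda_n/(\eta_1T+\mu_1-\gamma_1V)$, so I would split the phase space into $T$ large, where $N$ is forced small and hence $V$ decays, and $T$ small, where the source $\delta T$ is small. On each piece I would look for a weighted sum $W=T+c_1N+c_2O+c_3V$ satisfying $W'\le C-mW$, and then conclude $\limsup W\le C/m$. A single global $W$ fails because of the cross terms $\gamma_1VN$ and $\beta_1VO$, which is why the splitting is needed.

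I expect closing this $V$--$N$ feedback loop to be the hard part. If the two regimes cannot be patched into one consistent set of constants $T^{*},N^{*},O^{*},V^{*}$, the argument as planned does not give uniform boundedness for the stated parameters. In that case I would report explicitly that the claim as worded is not established by this route, rather than silently imposing extra parameter conditions.
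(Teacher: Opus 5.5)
\noindent You locate the obstruction in Step~3 correctly, but the remedy you sketch cannot work. No choice of $T^{*},N^{*},O^{*},V^{*}$ gives a box that captures every trajectory from non-negative data, because for the parameters of Table~\ref{table:paramters value} the statement is false.

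In your ``$T$ small'' regime the danger is not the source $\delta\frac{K_h}{K_h+O}T$. It is the self-amplifying term $\epsilon\frac{N}{K_n+N}V$, which is strongest exactly when no tumor is consuming nutrients. The plane $T=0$ is invariant. Take $T(0)=0$, $N(0)=\Lambda_n/\mu_1=10$, any $O(0)\ge 0$ and $V(0)>0$. Then $N'\ge\Lambda_n-\mu_1N$ keeps $N(t)\ge\Lambda_n/\mu_1$, and since $N/(K_n+N)$ is increasing,
\[
V'\ge\Bigl(\epsilon\,\frac{\Lambda_n/\mu_1}{K_n+\Lambda_n/\mu_1}-\sigma_1\Bigr)V=\lambda_4V,\qquad \lambda_4=\frac{0.01\cdot 10}{12}-0.0015\approx 6.8\times10^{-3}>0 .
\]
Hence $V(t)\ge V(0)e^{\lambda_4t}\to\infty$. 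Once $V>\mu_1/\gamma_1=100$ (and $V>\mu_2/\beta_1=40$), $N$ and $O$ also grow without bound. This $\lambda_4$ is exactly the fourth eigenvalue at $E_0$ in Theorem~\ref{Theorem:Stability}, and it is positive for these values.

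Your ``$T$ large'' regime does not help at realistic sizes either. While $T\le K$ one has $\eta_1T\le 0.6$, so $N$ is eventually at least $\Lambda_n/(\mu_1+\eta_1K)\approx 0.77$. That is above your threshold $\sigma_1K_n/(\epsilon-\sigma_1)\approx 0.35$, so $V$ keeps growing there as well.

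\noindent The paper's proof does not overcome this obstruction; it hides it. It works with $Z=T+\frac{\gamma}{\eta_1}N+\frac{\beta}{\eta_2}O$, which does cancel $\gamma NT$ and $\beta OT$. It then discards the leftover terms $\frac{\gamma\gamma_1}{\eta_1}VN+\frac{\beta\beta_1}{\eta_2}VO$ by computing ``in the absence of abnormal blood vessels''. These are precisely the cross terms you identified as the reason a single weighted sum fails. The paper then bounds $V$ by $\delta M_T/(\sigma_1-\epsilon)$. That presupposes $\sigma_1>\epsilon$, contrary to $\sigma_1=0.0015<\epsilon=0.01$ in the table, and $M_T$ itself was obtained with $V$ set to zero.

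So the honest endpoint of your plan is the one you anticipated: the theorem needs extra hypotheses. At minimum it needs $\lambda_4<0$, which the example above shows is necessary already on the face $T=0$. The paper's tacit $\sigma_1>\epsilon$ suffices on that face. For $T>0$, however, one still needs an argument that controls the $\gamma_1VN$ and $\beta_1VO$ couplings, and neither the paper nor your splitting supplies one.
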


\begin{proof}
Let us define $Z(t)$ as
\begin{equation}
    Z(t) = T(t) + \frac{\gamma}{\eta_1} N(t) + \frac{\beta}{\eta_2} O(t)
\end{equation}
In the absence of abnormal blood vessels, differentiating $Z(t)$ with respect to time $t$ we get,
\begin{equation}
    \frac{dZ}{dt} = \frac{dT}{dt} + \frac{\gamma}{\eta_1} \frac{dN}{dt} + \frac{\beta}{\eta_2} \frac{dO}{dt}
\end{equation}
Substituting the corresponding equations, the quadratic bilinear resource consumption terms ($-\gamma NT$ and $-\beta OT$) perfectly and identically cancel out with their respective metabolic uptake rates.
\begin{equation}
    \therefore\frac{dZ}{dt} = \alpha T - \frac{\alpha T^2}{K} - (\sigma + \mu)T + \frac{\gamma}{\eta_1}\Lambda_n - \frac{\gamma \mu_1}{\eta_1}N + \frac{\beta}{\eta_2}\Lambda_O - \frac{\beta \mu_2}{\eta_2}O
\end{equation}

\begin{equation}
    \implies\frac{dZ}{dt} + \lambda_0 Z \leq \left[ (\alpha - \sigma - \mu + \lambda_0)T - \frac{\alpha T^2}{K} \right] - \frac{\gamma}{\eta_1}(\mu_1 - \lambda_0)N - \frac{\beta}{\eta_2}(\mu_2 - \lambda_0)O + \frac{\gamma}{\eta_1}\Lambda_n + \frac{\beta}{\eta_2}\Lambda_O
    \label{eq:dZ_sequential}
\end{equation}
Now the maximum value of $\left( -\frac{\alpha T^2}{K} + (\alpha - \sigma - \mu + \lambda_0)T \right)$ is $ \frac{K}{2\alpha}\left(\alpha - \sigma - \mu + \lambda_0\right)^2$ 
Consequently, the entire right-hand side is bounded above by a  positive constant $\Theta$. So,
\begin{equation}
    \frac{dZ}{dt} + \lambda_0 Z \leq \Theta_0
\end{equation}
Applying Gronwall's comparison theorem, we find $\limsup_{t \to \infty} Z(t) \leq \frac{\Theta}{\lambda_0}$. So, this establishes that $T(t)$, $N(t)$, and $O(t)$ are bounded. \\
Now, let the upper bound for the tumor cells be denoted as $\limsup_{t \to \infty} T(t) \leq M_T$.

Next, we evaluate the abnormal blood vessel dynamics from equation \eqref{V'}. Utilizing the natural boundary properties of the saturation terms ($\frac{K_h}{K_h+O} \leq 1$ and $\frac{N}{K_n+N} \leq 1$), and replacing $T(t)$ with its upper bound $M_T$, the differential structure reduces to
\begin{align}
    \frac{dV}{dt} \leq \delta(1) M_T + \epsilon(1)V - \sigma_1 V \nonumber \\
    \Rightarrow \frac{dV}{dt} +(\sigma_1 - \epsilon)V \leq \delta M_T
    \label{eq:V_final_bound}
\end{align}

Considering the limit supremum leads us to
\begin{equation}
    \limsup_{t \to \infty} V(t) \leq \frac{\delta M_T}{\sigma_1 - \epsilon} 
\end{equation}
Thus, all the state variables of the model are bounded.
\end{proof}

\subsection{Existence and Uniqueness of the Model Solution}

\begin{theorem}
   \label{Theorem:Existence and Uniqueness}\index{Existence and Uniqueness} Let $D$ be the domain defined in such a way that Lipschitz conditions are satisfied. Then, for all non-negative initial conditions, the solution of the system exists and is also unique at the same time for all time $t\ge0$ in the domain $D$.
\end{theorem}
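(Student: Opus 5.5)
The plan is to write the system \eqref{T'}--\eqref{V'} in vector form $\dot X = F(X)$ with $X=(T,N,O,V)^{\top}$ and to apply the standard Picard--Lindel\"of theorem on a suitable domain. I will take
\[
D=\{(T,N,O,V)\in\mathbb{R}^4_{+} : 0\le T\le M_T,\ 0\le N\le M_N,\ 0\le O\le M_O,\ 0\le V\le M_V\},
\]
where the bounds $M_T, M_N, M_O, M_V$ are the ones obtained in Theorem~\ref{Theorem:Boundedness Theorem}, possibly enlarged so that they also dominate the initial data. By Theorem~\ref{Theorem:Positivity Theorem} and Theorem~\ref{Theorem:Boundedness Theorem}, trajectories starting in $\mathbb{R}^4_+$ stay nonnegative and eventually enter a compact region of this form. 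The strategy is to show that $F$ is $C^1$ on an open neighbourhood of $D$ and hence Lipschitz on the compact set $D$. This gives local existence and uniqueness. The a priori bounds then rule out finite-time blow-up, so the solution extends to all $t\ge0$.

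The key steps, in order, are as follows.
\begin{enumerate}
\item Check that each component $F_i$ is a polynomial or a rational function whose denominators are $K$, $K_h+O$ and $K_n+N$. The last two are bounded below by $K_h>0$ and $K_n>0$ on $O,N\ge0$, and in fact for $O>-K_h$ and $N>-K_n$, so $F$ is smooth near $D$.
\item Compute the partial derivatives $\partial F_i/\partial X_j$ explicitly. For example,
\[
\frac{\partial F_1}{\partial T}=\alpha-\frac{2\alpha T}{K}+\beta O+\gamma N-\sigma-\mu,
\qquad
\frac{\partial F_4}{\partial O}=-\frac{\delta K_h T}{(K_h+O)^2},
\qquad
\frac{\partial F_4}{\partial N}=\frac{\epsilon K_n V}{(K_n+N)^2}.
\]
The remaining entries are handled the same way.
\item Bound each entry in absolute value by a constant depending only on the parameters and on $M_T,M_N,M_O,M_V$. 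A typical estimate is $|\partial F_4/\partial O|\le \delta M_T/K_h$.
\item Invoke the mean value inequality on the convex set $D$ to obtain $\|F(X)-F(Y)\|\le L\|X-Y\|$, where $L$ is the maximal row sum of these bounds.
\item Conclude with Picard--Lindel\"of and the continuation argument.
\end{enumerate}

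The main obstacle is the global-in-time part, because the bound on $V$ in Theorem~\ref{Theorem:Boundedness Theorem} used $\sigma_1-\epsilon>0$, which fails for the tabulated values ($\epsilon=0.01>\sigma_1=0.0015$). To handle this, I would first try to show that no blow-up occurs on any finite interval $[0,T_f]$ without relying on that condition. Since $T$, $N$ and $O$ are bounded independently of $V$, the $V$-equation gives
\[
\dot V\le \delta M_T+(\epsilon-\sigma_1)V,
\]
so $V$ grows at most exponentially and stays finite on every $[0,T_f]$. This yields a $T_f$-dependent box $D$, Lipschitz continuity on it, and hence existence and uniqueness on each $[0,T_f]$, and therefore for all $t\ge0$.

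A second caveat is that the bound on $N$ and $O$ in Theorem~\ref{Theorem:Boundedness Theorem} was derived in the absence of vessels. If that step is not usable, I would run the same finite-horizon comparison on the full system. One route is a linear growth estimate for $Z=T+N+O+V$, controlling the bilinear terms $\gamma_1VN$ and $\beta_1VO$ on compact time intervals by local Lipschitz continuity together with the standard maximal-solution alternative. The aim is to show directly that the solution cannot leave every compact set in finite time.
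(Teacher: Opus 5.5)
Your steps 1--4 are the paper's proof. The paper lists all sixteen partial derivatives $\partial f_i/\partial x_j$, notes that they are finite on $D$, and concludes Lipschitz continuity and existence--uniqueness. Your derivative computations are correct, including $|\partial F_4/\partial O|\le \delta M_T/K_h$. So are your remarks that $\sigma_1-\epsilon<0$ for the tabulated values and that the boundedness theorem simply discards $V$. The paper never proves the ``for all $t\ge 0$'' part separately. That part is hidden in the hypothesis that $D$ is a region on which the derivatives are bounded and in which the trajectory stays.

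The gap is in your attempt to supply that part.

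\textbf{The first route is circular.} The claim that $T$, $N$, $O$ are bounded independently of $V$ is false. The inequalities $\dot N\le \Lambda_n+\gamma_1 VN$ and $\dot O\le\Lambda_O+\beta_1 VO$ only give $N(t)\le (N(0)+\Lambda_n t)\exp\bigl(\gamma_1\int_0^t V\,ds\bigr)$ and the analogous bound for $O$. The growth rate of $T$ contains $\beta O+\gamma N$, so a bound on $T$ needs control of $\int V$. But your estimate $\dot V\le\delta M_T+(\epsilon-\sigma_1)V$ needs $M_T$.

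\textbf{The fallback does not repair it.}
\begin{itemize}
\item For $Z=T+N+O+V$, $\dot Z$ contains $(\gamma-\eta_1)TN+(\beta-\eta_2)TO+\gamma_1VN+\beta_1VO$. This is genuinely quadratic (note $\gamma\gg\eta_1$ and $\beta\gg\eta_2$ in the table), so no linear growth estimate exists.
\item Even with the cancelling weights $Z=T+\frac{\gamma}{\eta_1}N+\frac{\beta}{\eta_2}O$, you only get $\dot Z\le C+cVZ$ and $\dot V\le \delta Z+\epsilon V$, with $c=\max(\gamma_1,\beta_1)$. This majorant system itself blows up in finite time: for $\dot Z=VZ$, $\dot V=Z$ one gets $\dot V=V^2/2+\mathrm{const}$. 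So no comparison argument of this type can work.
\item Local Lipschitz continuity gives no information about the size of the solution. Combining it with the maximal-solution alternative is therefore also circular.
\end{itemize}

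Ruling out blow-up would require an a priori estimate that actually uses the consumption terms $-\eta_1TN$ and $-\eta_2TO$. These act instantaneously, whereas $V$ only accumulates a time integral of $T$. Neither your proposal nor the paper provides such an estimate. As it stands, your argument proves exactly what the paper's proof does: local existence and uniqueness, extended to all $t\ge 0$ only under the tacit assumption that the trajectory remains in a bounded region $D$.
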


\begin{proof}
According to [ ], we need to show that
\begin{align*}
    \frac{\partial f_i}{\partial x_j}, \quad i,j = 1,2,3,4
\end{align*}
is continuous and bounded in the domain $D$. So, from equation \eqref{T'}, we can write
\begin{align}
    f_1 = \frac{dT}{dt} = \alpha T\left(1-\frac{T}{K}\right)+\beta OT+\gamma NT-\sigma T -\mu T.
    \label{eq:f1}
\end{align}

Differentiating $f_1$ with respect to $T$, we have
\begin{align*}
    \frac{\partial f_1}{\partial T} = \alpha-\frac{2\alpha T}{K}+\beta O+\gamma N-\sigma-\mu.
\end{align*}

Therefore,
\begin{align*}
    \left|\frac{\partial f_1}{\partial T}\right| = \left|\alpha-\frac{2\alpha T}{K}+\beta O+\gamma N-\sigma-\mu\right| < \infty.
\end{align*}

Again, differentiating $f_1$ with respect to $N$, $O$, and $V$ respectively, we get
\begin{align*}
    \frac{\partial f_1}{\partial N} = \gamma T \quad &\therefore \left|\frac{\partial f_1}{\partial N}\right| < \infty, \\
    \frac{\partial f_1}{\partial O} = \beta T \quad &\therefore \left|\frac{\partial f_1}{\partial O}\right| < \infty, \\
    \frac{\partial f_1}{\partial V} = 0 \quad &\therefore \left|\frac{\partial f_1}{\partial V}\right| < \infty.
\end{align*}

Also from equation \eqref{N'}, we can write
\begin{align}
    f_2 = \frac{dN}{dt} = \Lambda_n +\gamma_1 VN -\eta_1 T N-\mu_1N.
    \label{eq:f2}
\end{align}

Differentiating $f_2$ with respect to $T$, $N$, $O$, and $V$ respectively, we get
\begin{align*}
    \frac{\partial f_2}{\partial T} = -\eta_1 N \quad &\therefore \left|\frac{\partial f_2}{\partial T}\right| < \infty, \\
    \frac{\partial f_2}{\partial N} = \gamma_1 V-\eta_1 T-\mu_1 \quad &\therefore \left|\frac{\partial f_2}{\partial N}\right| < \infty, \\
    \frac{\partial f_2}{\partial O} = 0 \quad &\therefore \left|\frac{\partial f_2}{\partial O}\right| < \infty, \\
    \frac{\partial f_2}{\partial V} = \gamma_1 N \quad &\therefore \left|\frac{\partial f_2}{\partial V}\right| < \infty.
\end{align*}

Furthermore, from equation \eqref{O'}, we can write
\begin{align}
    f_3 = \frac{dO}{dt} = \Lambda_O+\beta_1 VO-\eta_2 T O-\mu_2O.
    \label{eq:f3}
\end{align}

Differentiating $f_3$ with respect to $T$, $N$, $O$, and $V$ respectively, we get
\begin{align*}
    \frac{\partial f_3}{\partial T} = -\eta_2 O \quad &\therefore \left|\frac{\partial f_3}{\partial T}\right| < \infty, \\
    \frac{\partial f_3}{\partial N} = 0 \quad &\therefore \left|\frac{\partial f_3}{\partial N}\right| < \infty, \\
    \frac{\partial f_3}{\partial O} = \beta_1 V-\eta_2 T-\mu_2 \quad &\therefore \left|\frac{\partial f_3}{\partial O}\right| < \infty, \\
    \frac{\partial f_3}{\partial V} = \beta_1 O \quad &\therefore \left|\frac{\partial f_3}{\partial V}\right| < \infty.
\end{align*}

And from equation \eqref{V'}, we can write
\begin{align}
    f_4 = \frac{dV}{dt} = \delta\left(\frac{K_h}{K_h+O}\right)T+\epsilon\left(\frac{N}{K_n+N}\right)V-\sigma_1 V.
    \label{eq:f4}
\end{align}

Differentiating $f_4$ with respect to $T$, $N$, $O$, and $V$ respectively, we get
\begin{align*}
    \frac{\partial f_4}{\partial T} = \delta\left(\frac{K_h}{K_h+O}\right) \quad \therefore \left|\frac{\partial f_4}{\partial T}\right| < \infty,
\end{align*}
\begin{align*}
    \frac{\partial f_4}{\partial O} = -\delta T\left[\frac{K_h}{(K_h+O)^2}\right] \quad \therefore \left|\frac{\partial f_4}{\partial O}\right| < \infty,
\end{align*}
\begin{align*}
    \frac{\partial f_4}{\partial N} = \epsilon V\left[\frac{K_n}{(K_n+N)^2}\right] \quad \therefore \left|\frac{\partial f_4}{\partial N}\right| < \infty,
\end{align*}
\begin{align*}
    \frac{\partial f_4}{\partial V} = \epsilon\left(\frac{N}{K_n+N}\right)-\sigma_1 \quad \therefore \left|\frac{\partial f_4}{\partial V}\right| < \infty,
\end{align*}

Thus, all the partial derivatives
\begin{align*}
    \left|\frac{\partial f_i}{\partial T}\right|,\ \left|\frac{\partial f_i}{\partial N}\right|,\ \left|\frac{\partial f_i}{\partial O}\right|,\ \left|\frac{\partial f_i}{\partial V}\right| < \infty, \qquad i=1,2,3,4,
\end{align*}
are bounded and continuous in $D$, satisfying the Lipschitz condition. Therefore, a unique solution to the proposed system exists in the region $D$.
\end{proof}
\subsection{Equilibrium Analysis}
\subsubsection{Tumor-Free Equilibrium (TFE)}
To analyze and understand the system behavior, steady state analysis is the first step. The equilibrium state represents a stage at which the interactions among tumor cells, oxygen, nutrients and other microenvironmental factors attain a dynamic balance causing the system variables to remain unchanged over time.
An equilibrium point of the system is obtained by setting all the rate of change equations to zero, i.e., 
\begin{align*}
    \frac{dT}{dt} = 0, \quad \frac{dN}{dt} = 0, \quad \frac{dO}{dt} = 0, \quad \frac{dV}{dt} = 0
\end{align*}
To investigate the biological state where the tumor is completely eradicated, we assume the absence of tumor cells, setting $T = 0$. Let the corresponding tumor-free state be denoted as $E_0 = (T_0, N_0, O_0, V_0) = (0, N_0, O_0, V_0)$.

\noindent Substituting $T = 0$ into equation \eqref{V'}, the dynamics of the vascular network simplifies to
\begin{align*}
    \left[ \epsilon \left(\frac{N_0}{K_n+N_0}\right) - \sigma_1 \right] V_0 = 0
\end{align*}
In a healthy biological state without tumor-induced angiogenic signals, abnormal blood vessels do not proliferate, yielding $V_0 = 0$.

\noindent Next, substituting $T = 0$ and $V_0 = 0$ into the nutrient equation \eqref{N'}, we get,
\begin{align*}
    \Lambda_n - \mu_1 N_0 = 0 \implies N_0 = \frac{\Lambda_n}{\mu_1}
\end{align*}
Similarly, substituting $T = 0$ and $V_0 = 0$ into the oxygen dynamics equation \eqref{O'} yields
\begin{align*}
    \Lambda_O - \mu_2 O_0 = 0 \implies O_0 = \frac{\Lambda_O}{\mu_2}
\end{align*}
Thus, the system admits a unique \textbf{Tumor-Free Equilibrium (TFE)} point, denoted by
\begin{equation}
    E_0 = \left( 0, \frac{\Lambda_n}{\mu_1}, \frac{\Lambda_O}{\mu_2}, 0 \right)
    \label{eq:TFE_point}
\end{equation}
\subsubsection{Endemic Equilibrium}
The endemic equilibrium point is where tumor cell populations and other microenvironmental factors coexist continuously ($T^* > 0, N^* > 0, O^* > 0, V^* > 0$). Setting the right-hand side of equations \eqref{T'}--\eqref{V'} to zero, endemic equilibrum $E_1(T^*, N^*, O^*, V^*)$ can be obtained by solving the following relations
\begin{align}
    T^* &= K \left[ 1 + \frac{\beta O^* + \gamma N^* - (\sigma + \mu)}{\alpha} \right] \label{eq:T_star} \\
    V^* &= \frac{\delta K_h T^*}{(K_h + O^*) \left[ \sigma_1 - \epsilon \left(\frac{N^*}{K_n+N^*}\right) \right]} \label{eq:V_star} \\
    N^* &= \frac{\Lambda_n}{\mu_1 + \eta_1 T^* - \gamma_1 V^*} \label{eq:N_star} \\
    O^* &= \frac{\Lambda_O}{\mu_2 + \eta_2 T^* - \beta_1 V^*} \label{eq:O_star}
\end{align}
\subsection {Future Status of the Tumor}
The reproduction number $R_0$ is a threshold indicator of determining the stability of the disease-free state. By using the next-generation matrix approach, we have determined the basic reproduction number. The new infection matrix and its transfer matrix can be denoted by $F$ and $V$, respectively.

\begin{align*}
    F = \begin{pmatrix} \alpha + \beta O_0 + \gamma N_0 \end{pmatrix}, \qquad
    V = \begin{pmatrix} \sigma + \mu \end{pmatrix}
\end{align*}

The basic reproduction number $R_0$ is the spectral radius of the next-generation matrix $FV^{-1}$. Therefore, the reproduction number of our model is
\begin{align}
    R_0 = \frac{\alpha + \beta O_0 + \gamma N_0}{\sigma + \mu} = \frac{\alpha + \beta \left(\frac{\Lambda_O}{\mu_2}\right) + \gamma \left(\frac{\Lambda_n}{\mu_1}\right)}{\sigma + \mu}.
\end{align}
This can determine the further status of the tumor, whether the tumor will die out or persist in the microenvironment. Mathematically, if $R_0<1$ then it represents that the tumor will be eradicated and if $R_0>1$ then this indicates the tumor will persist.
\subsection{Local Stability Analysis}
\begin{theorem}
   \label{Theorem:Stability}\index{Local Stability at Disease free equilibrium} The model is locally asymptotically stable at the disease-free equilibrium point if $R_0<1$ and $\epsilon \left(\frac{N_0}{K_n+N_0}\right) < \sigma_1$, otherwise unstable.
\end{theorem}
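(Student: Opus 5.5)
We plan to linearize the system \eqref{T'}--\eqref{V'} at the tumor-free equilibrium $E_0=(0,N_0,O_0,V_0)$ with $N_0=\Lambda_n/\mu_1$, $O_0=\Lambda_O/\mu_2$, $V_0=0$. Using the partial derivatives already computed in the proof of Theorem \ref{Theorem:Existence and Uniqueness}, we evaluate them at $E_0$. Order the state vector as $(T,N,O,V)$. Since $T=0$ and $V=0$ there, $\partial f_1/\partial N=\gamma T=0$, $\partial f_1/\partial O=\beta T=0$, and $\partial f_4/\partial O=\partial f_4/\partial N=0$. The Jacobian therefore becomes
\begin{align*}
J(E_0)=\begin{pmatrix}
\alpha+\beta O_0+\gamma N_0-\sigma-\mu & 0 & 0 & 0\\
-\eta_1 N_0 & -\mu_1 & 0 & \gamma_1 N_0\\
-\eta_2 O_0 & 0 & -\mu_2 & \beta_1 O_0\\
\delta\frac{K_h}{K_h+O_0} & 0 & 0 & \epsilon\frac{N_0}{K_n+N_0}-\sigma_1
\end{pmatrix}.
\end{align*}

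The key observation is the block structure of this matrix. The first row has only its diagonal entry nonzero, so expanding the determinant along it gives the first eigenvalue
\[
\lambda_1=\alpha+\beta O_0+\gamma N_0-(\sigma+\mu)=(\sigma+\mu)(R_0-1).
\]
In the remaining $3\times 3$ minor, the columns for $N$ and $O$ each have only their diagonal entries nonzero. Expanding along those columns yields $\lambda_2=-\mu_1$ and $\lambda_3=-\mu_2$, and the last entry gives $\lambda_4=\epsilon\frac{N_0}{K_n+N_0}-\sigma_1$. Equivalently, after permuting to the ordering $(T,V,N,O)$, the matrix is block lower triangular. Its characteristic polynomial then factors as $(\lambda-\lambda_1)(\lambda+\mu_1)(\lambda+\mu_2)(\lambda-\lambda_4)$.

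We then conclude as follows. Since $\mu_1,\mu_2>0$, we have $\lambda_2,\lambda_3<0$ always. The eigenvalue $\lambda_1<0$ holds if and only if $R_0<1$, using $\sigma+\mu>0$ and the expression for $R_0$ obtained by the next-generation method. The eigenvalue $\lambda_4<0$ holds if and only if $\epsilon N_0/(K_n+N_0)<\sigma_1$. When both conditions hold, all eigenvalues are real and negative, and the Hartman--Grobman theorem (linearized stability) gives local asymptotic stability of $E_0$. If either $R_0>1$ or $\epsilon N_0/(K_n+N_0)>\sigma_1$, a positive real eigenvalue exists and $E_0$ is unstable.

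The main obstacle is not computational but interpretive: making precise what ``otherwise unstable'' means. The borderline cases $R_0=1$ or $\lambda_4=0$ give a zero eigenvalue and are not decided by linearization. We intend to state the instability claim for strict reversal of the inequalities and treat the equality cases as non-hyperbolic, which a center manifold analysis would be needed to resolve. We also check that the entries $\partial f_4/\partial T$ and the off-diagonal terms do not affect the spectrum; this follows because the matrix is triangular after the permutation above.
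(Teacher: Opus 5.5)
Your proposal is correct and follows essentially the same route as the paper: the same Jacobian $J(E_0)$, expansion along the first row to get $\lambda_1=(\alpha+\beta O_0+\gamma N_0)-(\sigma+\mu)$, and reading off $\lambda_2=-\mu_1$, $\lambda_3=-\mu_2$, $\lambda_4=\epsilon\frac{N_0}{K_n+N_0}-\sigma_1$ from the upper-triangular $3\times 3$ minor. Your remark that ``otherwise unstable'' follows from linearization only when an inequality is strictly reversed, and that $R_0=1$ or $\lambda_4=0$ are non-hyperbolic cases linearization cannot decide, is a correct refinement the paper's proof omits.
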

\begin{proof}
Let us consider the Jacobian matrix $J$. Then, at the disease-free equilibrium point $J$ can be written as
\begin{align*}
    J_0 = \begin{pmatrix}
    (\alpha + \beta O_0 + \gamma N_0 - \sigma - \mu) & 0 & 0 & 0 \\
    -\eta_1 N_0 & -\mu_1 & 0 & \gamma_1 N_0 \\
    -\eta_2 O_0 & 0 & -\mu_2 & \beta_1 O_0 \\
    \delta \left(\frac{K_h}{K_h+O_0}\right) & 0 & 0 & \epsilon \left(\frac{N_0}{K_n+N_0}\right) - \sigma_1
    \end{pmatrix}.
\end{align*}

Then, the characteristic equation of the Jacobian matrix is $\begin{vmatrix}
    (J_0-\lambda I)
\end{vmatrix}=0$. Hence,
\begin{align*}
    \begin{vmatrix}
    (\alpha + \beta O_0 + \gamma N_0 - \sigma - \mu) - \lambda & 0 & 0 & 0 \\
    -\eta_1 N_0 & -\mu_1 - \lambda & 0 & \gamma_1 N_0 \\
    -\eta_2 O_0 & 0 & -\mu_2 - \lambda & \beta_1 O_0 \\
    \delta \left(\frac{K_h}{K_h+O_0}\right) & 0 & 0 & \left[\epsilon \left(\frac{N_0}{K_n+N_0}\right) - \sigma_1\right] - \lambda
    \end{vmatrix} = 0.
\end{align*}

Expanding along the first row, the first eigenvalue is directly obtained as $\lambda_1 = (\alpha + \beta O_0 + \gamma N_0) - (\sigma + \mu)$, and the remaining determinant reduces to the following form:
\begin{align*}
    \left[(\alpha + \beta O_0 + \gamma N_0 - \sigma - \mu) - \lambda\right]
    \begin{vmatrix}
    -\mu_1 - \lambda & 0 & \gamma_1 N_0 \\
    0 & -\mu_2 - \lambda & \beta_1 O_0 \\
    0 & 0 & \left[\epsilon \left(\frac{N_0}{K_n+N_0}\right) - \sigma_1\right] - \lambda
    \end{vmatrix} = 0.
\end{align*}

Since this reduced $3 \times 3$ determinant is in a clear upper-triangular layout (as all entries below the main diagonal are zero), it simply reduces to the product of its diagonal entries:
\begin{align*}
    \left[(\alpha + \beta O_0 + \gamma N_0 - \sigma - \mu) - \lambda\right] [-\mu_1 - \lambda] [-\mu_2 - \lambda] \left[\epsilon \left(\frac{N_0}{K_n+N_0}\right) - \sigma_1 - \lambda\right] = 0.
\end{align*}

Solving this characteristic equation gives the four distinct eigenvalues of the system:
\begin{align*}
    \lambda_1 = (\alpha + \beta O_0 + \gamma N_0) - (\sigma + \mu), \qquad \lambda_2 = -\mu_1, \qquad \lambda_3 = -\mu_2, \qquad \lambda_4 = \epsilon \left(\frac{N_0}{K_n+N_0}\right) - \sigma_1.
\end{align*}

It is clear that $\lambda_2$ and $\lambda_3$ are intrinsically negative. The eigenvalue $\lambda_4$ is negative by the structural vascular constraint $\epsilon \left(\frac{N_0}{K_n+N_0}\right) < \sigma_1$. The remaining eigenvalue $\lambda_1$ is negative if and only if
\begin{align*}
    &\alpha + \beta O_0 + \gamma N_0 < \sigma+\mu \\
    &\implies\frac{\alpha + \beta O_0 + \gamma N_0}{\sigma+\mu} < 1 \\
    & \implies R_0 < 1.
\end{align*}

Therefore, all the eigenvalues of $J_0$ have negative real parts if and only if $R_0<1$ and the vascular regression condition holds. As a result, the model is locally asymptotically stable at the disease-free equilibrium point.
\end{proof}

\subsection{Local Asymptotic Stability of the Endemic Equilibrium}
To analyze the local asymptotic stability of the endemic equilibrium state $E_1(T^*, N^*, O^*, V^*)$, we evaluate the linearized system around $E_1$. The general Jacobian matrix $J$ of the system \eqref{eq:f1}--\eqref{eq:f4} is defined by
\begin{equation}
    J = \begin{pmatrix}  
        \frac{\partial f_1}{\partial T} & \frac{\partial f_1}{\partial N} & \frac{\partial f_1}{\partial O} & \frac{\partial f_1}{\partial V} \\[1ex]
        \frac{\partial f_2}{\partial T} & \frac{\partial f_2}{\partial N} & \frac{\partial f_2}{\partial O} & \frac{\partial f_2}{\partial V} \\[1ex]
        \frac{\partial f_3}{\partial T} & \frac{\partial f_3}{\partial N} & \frac{\partial f_3}{\partial O} & \frac{\partial f_3}{\partial V} \\[1ex]
        \frac{\partial f_4}{\partial T} & \frac{\partial f_4}{\partial N} & \frac{\partial f_4}{\partial O} & \frac{\partial f_4}{\partial V}  
    \end{pmatrix}
    \label{eq:Jacobian_generic}
\end{equation}

Now, we have,
\begin{align*}
    \frac{\partial f_1}{\partial T} &= \alpha \left(1 - \frac{2T}{K}\right) + \beta O + \gamma N - \sigma - \mu, \quad 
    \frac{\partial f_1}{\partial N} = \gamma T, \quad \frac{\partial f_1}{\partial O} = \beta T, \quad \frac{\partial f_1}{\partial V} = 0, \\
    \frac{\partial f_2}{\partial T} &= -\eta_1 N, \quad \frac{\partial f_2}{\partial N} = \gamma_1 V - \eta_1 T - \mu_1, \quad 
    \frac{\partial f_2}{\partial O} = 0, \quad \frac{\partial f_2}{\partial V} = \gamma_1 N, \\
    \frac{\partial f_3}{\partial T} &= -\eta_2 O, \quad \frac{\partial f_3}{\partial N} = 0, \quad 
    \frac{\partial f_3}{\partial O} = \beta_1 V - \eta_2 T - \mu_2, \quad \frac{\partial f_3}{\partial V} = \beta_1 O, \\
    \frac{\partial f_4}{\partial T} &= \delta \left(\frac{K_h}{K_h+O}\right), \quad 
    \frac{\partial f_4}{\partial N} = \frac{\epsilon V K_n}{(K_n+N)^2}, \quad 
    \frac{\partial f_4}{\partial O} = -\frac{\delta T K_h}{(K_h+O)^2}, \quad 
    \frac{\partial f_4}{\partial V} = \epsilon \left(\frac{N}{K_n+N}\right) - \sigma_1
\end{align*}

To simplify the diagonal elements of the Jacobian matrix at the endemic equilibrium $E_1$, we invoke the steady-state algebraic relations satisfying $\frac{dT}{dt}=\frac{dN}{dt}=\frac{dO}{dt}=\frac{dV}{dt}=0$.
\begin{align}
    \alpha \left(1-\frac{T^*}{K}\right) + \beta O^* + \gamma N^* - \sigma - \mu &= 0 \implies \alpha + \beta O^* + \gamma N^* - \sigma - \mu = \frac{\alpha T^*}{K}, \label{eq:sub1} \\
    \Lambda_n + \gamma_1 V^* N^* - \eta_1 T^* N^* - \mu_1 N^* &= 0 \implies \gamma_1 V^* - \eta_1 T^* - \mu_1 = -\frac{\Lambda_n}{N^*}, \label{eq:sub2} \\
    \Lambda_O + \beta_1 V^* O^* - \eta_2 T^* O^* - \mu_2 O^* &= 0 \implies \beta_1 V^* - \eta_2 T^* - \mu_2 = -\frac{\Lambda_O}{O^*}, \label{eq:sub3} \\
    \delta\left(\frac{K_h} {K_h+O^*}\right) T^* + \epsilon \left(\frac{N^*}{K_n+N^*}\right)V^* - \sigma_1 V^* &= 0 \implies \epsilon \left(\frac{N^*}{K_n+N^*}\right) - \sigma_1 = -\frac{\delta T^*}{V^*} \left(\frac{K_h}{K_h+O^*}\right) \label{eq:sub4}
\end{align}

Now, substituting these identities \eqref{eq:sub1}--\eqref{eq:sub4} into each corresponding diagonal component of the Jacobian matrix evaluated at $E_1$,
\begin{align*}
    \frac{\partial f_1}{\partial T}\Bigg|_{E^*} &= \alpha - \frac{2\alpha T^*}{K} + \beta O^* + \gamma N^* - \sigma - \mu = \left(\alpha + \beta O^* + \gamma N^* - \sigma - \mu\right) - \frac{2\alpha T^*}{K} = \frac{\alpha T^*}{K} - \frac{2\alpha T^*}{K} = -\frac{\alpha T^*}{K}, \\
    \frac{\partial f_2}{\partial N}\Bigg|_{E^*} &= \gamma_1 V^* - \eta_1 T^* - \mu_1 = -\frac{\Lambda_n}{N^*}, \\
    \frac{\partial f_3}{\partial O}\Bigg|_{E^*} &= \beta_1 V^* - \eta_2 T^* - \mu_2 = -\frac{\Lambda_O}{O^*}, \\
    \frac{\partial f_4}{\partial V}\Bigg|_{E^*} &= \epsilon \left(\frac{N^*}{K_n+N^*}\right) - \sigma_1 = -\frac{\delta T^*}{V^*} \left(\frac{K_h}{K_h+O^*}\right)
\end{align*}

Applying these reductions, we get,
\begin{equation}
    J(E_1) = \begin{pmatrix}  
        -a_{11} & a_{12} & a_{13} & 0 \\  
        -a_{21} & -a_{22} & 0 & a_{24} \\  
        -a_{31} & 0 & -a_{33} & a_{34} \\  
        a_{41} & a_{42} & -a_{43} & -a_{44}  
    \end{pmatrix}
    \label{eq:Jacobian_matrix}
\end{equation}
where,
\begin{align*}
    a_{11} &= \frac{\alpha T^*}{K}, \quad a_{12} = \gamma T^*, \quad a_{13} = \beta T^*, \\
    a_{21} &= \eta_1 N^*, \quad a_{22} = \frac{\Lambda_n}{N^*}, \quad a_{24} = \gamma_1 N^*, \\
    a_{31} &= \eta_2 O^*, \quad a_{33} = \frac{\Lambda_O}{O^*}, \quad a_{34} = \beta_1 O^*, \\
    a_{41} &= \delta\left(\frac{K_h} {K_h+O^*}\right), \quad a_{42} = \frac{\epsilon V^* K_n}{(K_n+N^*)^2}, \\
    a_{43} &= \frac{\delta T^* K_h}{(K_h+O^*)^2}, \quad a_{44} = \frac{\delta T^*}{V^*}\left(\frac{K_h} {K_h+O^*}\right)
\end{align*}

The characteristic equation of the Jacobian \eqref{eq:Jacobian_matrix} is
\begin{equation}
    \lambda^4 + A_1 \lambda^3 + A_2 \lambda^2 + A_3 \lambda + A_4 = 0
    \label{eq:char_polynomial}
\end{equation}
So, the relation between the components $a_{ij}$ and polynomial coefficients is defined as
\begin{align*}
    A_1 = & \, a_{11} + a_{22} + a_{33} + a_{44}, \\
    A_2 = & \, (a_{11}a_{22} + a_{12}a_{21}) + (a_{11}a_{33} + a_{13}a_{31}) + a_{11}a_{44} + a_{22}a_{33} + (a_{22}a_{44} - a_{24}a_{42}) + (a_{33}a_{44} + a_{34}a_{43}), \\
    A_3 = & \, a_{22}a_{33}a_{44} + a_{22}a_{34}a_{43} - a_{24}a_{33}a_{42} + a_{11}a_{33}a_{44} + a_{11}a_{34}a_{43} + a_{13}a_{31}a_{44} - a_{13}a_{34}a_{41} \\
          & + a_{11}a_{22}a_{44} - a_{11}a_{24}a_{42} + a_{12}a_{21}a_{44} - a_{12}a_{24}a_{41} + a_{11}a_{22}a_{33} + a_{12}a_{21}a_{33} + a_{13}a_{22}a_{31}, \\
    A_4 = & \, a_{11}a_{22}a_{33}a_{44} + a_{11}a_{22}a_{34}a_{43} - a_{11}a_{24}a_{33}a_{42} + a_{12}a_{21}a_{33}a_{44} + a_{12}a_{21}a_{34}a_{43} \\
          & - a_{12}a_{24}a_{31}a_{43} - a_{12}a_{24}a_{33}a_{41} + a_{13}a_{21}a_{34}a_{42} + a_{13}a_{22}a_{31}a_{44} - a_{13}a_{22}a_{34}a_{41} - a_{13}a_{24}a_{31}a_{42}
\end{align*}

According to the Routh-Hurwitz criteria, the endemic equilibrium point $E^*$ is locally asymptotically stable if the following conditions are satisfied. We have shown that the first condition is already satisfied.
\begin{equation*}
    A_1 > 0, \quad A_4 > 0, \quad A_1 A_2 - A_3 > 0, \quad \text{and} \quad A_1 A_2 A_3 - A_3^2 - A_1^2 A_4 > 0
\end{equation*}
\subsection{Global Asymptotic Stability at tumor-free equilibrium point}

For our model, the global stability at the tumor-free point is analy using the method applied in \citep{Amir2020}. The method is summarized as, the proposed model is reduced into the following two subsystems given by
\begin{align}
    \frac{\mathrm{d}X_1}{\mathrm{d}t} &= G(X_1, X_2), \nonumber \\
    \frac{\mathrm{d}X_2}{\mathrm{d}t} &= H(X_1, X_2). \label{eq:26}
\end{align}
In the system \eqref{eq:26}, $X_1$ and $X_2$ represent uninfected and malignant components, respectively, that is, $X_1 = (N, O) \in \mathbb{R}^2$ and $X_2 = (T, V) \in \mathbb{R}^2$. The tumor-free equilibrium is denoted by $E_0$ and define as $E_0 = (X_1^0, 0)$. Thus, the existence of global stability at the tumor-free equilibrium point depends on the following two conditions
\begin{enumerate}
    \item If $\frac{\mathrm{d}X_1}{\mathrm{d}t} = G(X_1, 0)$, $X_1^0$ is globally asymptotically stable.
    \item $H(X_1, X_2) = BX_2 - \bar{H}(X_1, X_2)$, where $\bar{H}(X_1, X_2) \geq 0$ for $(X_1, X_2) \in \Delta$.
\end{enumerate}
At second condition, $B = D_{X_2} H(X_1^0, 0)$ is an $M$-matrix that is the off diagonal entries are positive and $\Delta$ is the feasible region. Then, the following statement holds.\\

\textbf{Lemma 1} \textit{For $\mathcal{R}_0 < 1$, then the equilibrium point $E^0 = (X_1^0, 0)$ of the system (1) is said to be globally asymptotically stable, if the above conditions are satisfied}

\begin{theorem}
\textit{If $\mathcal{R}_0 < 1$, then the proposed model (1) is globally asymptotically stable at disease-free equilibrium $E_0$ and unstable otherwise.}
\end{theorem}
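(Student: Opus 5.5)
The plan is to verify the two conditions of Lemma 1, the Castillo-Chavez type criterion of \citep{Amir2020}, with the splitting already fixed in the text: $X_1=(N,O)$, $X_2=(T,V)$ and $E_0=(X_1^0,0)$ with $X_1^0=(\Lambda_n/\mu_1,\Lambda_O/\mu_2)=(N_0,O_0)$. Instability for $\mathcal{R}_0>1$ will simply be quoted from Theorem \ref{Theorem:Stability}, since then $\lambda_1>0$.

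\emph{Condition 1.} Setting $X_2=0$ in \eqref{N'}--\eqref{O'} gives the decoupled linear system $N'=\Lambda_n-\mu_1N$, $O'=\Lambda_O-\mu_2O$. It integrates explicitly:
\begin{align*}
N(t)=N_0+(N(0)-N_0)e^{-\mu_1 t},\qquad O(t)=O_0+(O(0)-O_0)e^{-\mu_2 t}.
\end{align*}
So $X_1^0$ is globally asymptotically stable for $G(X_1,0)$. This step is routine.

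\emph{Condition 2.} I will compute $B=D_{X_2}H(X_1^0,0)$:
\begin{align*}
B=\begin{pmatrix} \alpha+\beta O_0+\gamma N_0-\sigma-\mu & 0\\ \delta\frac{K_h}{K_h+O_0} & \epsilon\frac{N_0}{K_n+N_0}-\sigma_1\end{pmatrix}.
\end{align*}
Its off-diagonal entries are nonnegative, so $B$ is a Metzler (M-type) matrix as required. Writing $\bar H=BX_2-H$ gives
\begin{align*}
\bar H_1&=\Big[\beta(O_0-O)+\gamma(N_0-N)+\frac{\alpha T}{K}\Big]T,\\
\bar H_2&=\delta K_h T\Big(\frac{1}{K_h+O_0}-\frac{1}{K_h+O}\Big)+\epsilon V\Big(\frac{N_0}{K_n+N_0}-\frac{N}{K_n+N}\Big).
\end{align*}
The plan is to take $\Delta$ inside the bounded region $\Omega$ of Theorem \ref{Theorem:Boundedness Theorem}, intersected with $\{N\le N_0\}$, and to show $\bar H\ge 0$ there. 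On that set, the $\epsilon$-term of $\bar H_2$ and the $\gamma$-term of $\bar H_1$ are nonnegative.

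\emph{Main obstacle.} The oxygen terms pull in opposite directions. $\bar H_1\ge0$ wants $O\le O_0$, while the $\delta$-term of $\bar H_2$ wants $O\ge O_0$. Moreover, the vessel supply terms $\gamma_1VN$ and $\beta_1VO$ allow $N,O$ to exceed $N_0,O_0$, so the sets $\{N\le N_0\}$ and $\{O\le O_0\}$ are not obviously invariant.

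To get around this, I would not insist on $\bar H\ge0$ verbatim. Instead I would use the comparison argument underlying Lemma 1, exploiting the triangular structure of $B$. Concretely:
\begin{itemize}
\item[(i)] Bound $\limsup N$ and $\limsup O$ by the steady states of $N'\le\Lambda_n+\gamma_1\bar V N-\mu_1N$ and $O'\le\Lambda_O+\beta_1\bar V O-\mu_2 O$, using the bound $\bar V$ on $V$ from Theorem \ref{Theorem:Boundedness Theorem}.
\item[(ii)] Under $\mathcal{R}_0<1$, with a small margin absorbing these perturbations, obtain $T'\le -cT$ eventually, hence $T\to0$.
\item[(iii)] Treat $V'=\delta\frac{K_h}{K_h+O}T+(\epsilon\frac{N}{K_n+N}-\sigma_1)V$ as a linear equation driven by a vanishing input. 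Using $\epsilon N_0/(K_n+N_0)<\sigma_1$, as in Theorem \ref{Theorem:Stability}, conclude $V\to0$, and then $X_1\to X_1^0$ by Condition 1 and asymptotically autonomous systems theory.
\end{itemize}
Combined with the local stability of Theorem \ref{Theorem:Stability}, this yields global asymptotic stability. I expect step (ii), controlling the vessel-driven excess of $N$ and $O$, to be the delicate point. It may require slightly strengthening $\mathcal{R}_0<1$ to a version evaluated at these upper bounds.
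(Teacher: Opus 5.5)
Your diagnosis of the obstruction is correct, and it is exactly where the paper's own proof breaks. The paper takes the verbatim route: it asserts $N\le N_0$ and $O\le O_0$ without justification (they are not invariant, for the reason you give), and then declares both components of $\bar H$ nonnegative. But under $O\le O_0$ the term $\delta K_hT\bigl(\frac{1}{K_h+O_0}-\frac{1}{K_h+O}\bigr)$ is nonpositive, so the paper's check of condition 2 contains a sign error. Your observation that this $\delta$-term is harmless because $B$ is lower triangular ($T'$ does not involve $V$) is the right way around it.

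\textbf{Step (ii).} This is the genuine gap in your proposal. $\mathcal{R}_0<1$ supplies only the fixed margin $\sigma+\mu-(\alpha+\beta O_0+\gamma N_0)$. The excess $\beta(\bar O-O_0)+\gamma(\bar N-N_0)$ from step (i), with $\bar N=\Lambda_n/(\mu_1-\gamma_1\bar V)$ and $\bar O=\Lambda_O/(\mu_2-\beta_1\bar V)$, is governed by $\bar V=\delta M_T/(\sigma_1-\epsilon)$. That quantity is not small in general and is unrelated to the margin; step (i) even needs $\gamma_1\bar V<\mu_1$ and $\beta_1\bar V<\mu_2$ to give finite bounds. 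A ``small margin'' argument only works once $T$, and hence $V$, is already small, so it reproduces local stability, not global attraction. What your argument actually proves is global stability under the stronger threshold $\alpha+\beta\bar O+\gamma\bar N<\sigma+\mu$, not under $\mathcal{R}_0<1$. (This threshold does imply $\mathcal{R}_0<1$, since $\bar N\ge N_0$ and $\bar O\ge O_0$.)

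\textbf{Step (iii).} This also fails as written. The condition $\epsilon N_0/(K_n+N_0)<\sigma_1$ controls $V$ only once $N$ is near $N_0$, which is what you are trying to deduce. Concretely, suppose $\epsilon>\sigma_1>\epsilon N_0/(K_n+N_0)$, so that $\mathcal{R}_0<1$ makes $E_0$ locally asymptotically stable. On the invariant face $T=0$, the set $\{N>\sigma_1K_n/(\epsilon-\sigma_1),\ V>\mu_1/\gamma_1\}$ is forward invariant, with $N$ increasing and $V$ growing exponentially. What closes (iii) is $\sigma_1>\epsilon$, which gives $\epsilon\frac{N}{K_n+N}-\sigma_1\le\epsilon-\sigma_1<0$ uniformly.

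\textbf{The statement itself.} These additions are not optional, because the statement is false under $\mathcal{R}_0<1$ alone. Keep the values of Table~\ref{table:paramters value} but take $\alpha=0.05$, the value used in the paper's phase portraits. Then $\mathcal{R}_0=0.065/0.098<1$, yet $N_0=10$ gives $\lambda_4=\epsilon N_0/(K_n+N_0)-\sigma_1\approx6.8\times10^{-3}>0$ in Theorem~\ref{Theorem:Stability}, so $E_0$ is unstable.

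Hence $\sigma_1>\epsilon$, $\gamma_1\bar V<\mu_1$, $\beta_1\bar V<\mu_2$ and the threshold at $(\bar N,\bar O)$ must be written into the theorem. With them, your steps (i)--(iii) together with Theorem~\ref{Theorem:Stability} do yield global asymptotic stability. This still grants the bound $\bar V$, whose proof in Theorem~\ref{Theorem:Boundedness Theorem} drops the vessel terms $\gamma_1VN$ and $\beta_1VO$ and is therefore itself incomplete. Finally, ``unstable otherwise'' is covered only for $\mathcal{R}_0>1$; at $\mathcal{R}_0=1$ the linearization is inconclusive.
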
 
\begin{proof}
Let $X_1 =N, O)$ and $X_2 = (T, V)$ and define $E_0 = (X_0, 0)$, where
\begin{equation}
    X_1^0 = \left( \frac{\Lambda_n}{\mu_1}, \frac{\Lambda_O}{\mu_2} \right). \label{eq:27}
\end{equation}
By using model system (1), we have
\begin{align*}
    \frac{\mathrm{d}X_1}{\mathrm{d}t} &= G(X_1, X_2), \nonumber \\
    \frac{\mathrm{d}X_1}{\mathrm{d}t} &= \begin{pmatrix}
        \Lambda_n + \gamma_1 VN - \eta_1 TN - \mu_1 N \\
        \Lambda_O + \beta_1 VO - \eta_2 TO - \mu_2 O
    \end{pmatrix}. \label{eq:28}
\end{align*}
For $N = N_0, O = O_0$ and $G(X_1, 0) = 0$, we get
\begin{equation}
    G(X_1, 0) = \begin{pmatrix}
        \Lambda_n - \mu_1 N \\
        \Lambda_O - \mu_2 O
    \end{pmatrix} = 0. \label{eq:29}
\end{equation}
Thus, from equation \eqref{eq:29} as $t \rightarrow \infty$, $X_1 \rightarrow X_1^0$. So $X_1 = X_1^0$ is globally asymptotically stable. Now,
\begin{align*}
    &BX_2 - \bar{H}(X_1, X_2) &= \begin{pmatrix}
        \alpha + \beta O_0 + \gamma N_0 - (\sigma + \mu) & 0 \\
        \delta \left( \frac{K_h}{K_h+O_0} \right) & \epsilon \left( \frac{N_0}{K_n+N_0} \right) - \sigma_1
    \end{pmatrix} \begin{pmatrix} T \\ V \end{pmatrix}\\
    &&- \begin{pmatrix}
        \frac{\alpha T^2}{K} + \beta(O_0 - O)T + \gamma(N_0 - N)T \\
        \delta K_h T \left( \frac{1}{K_h+O_0} - \frac{1}{K_h+O} \right) + \epsilon V \left( \frac{N_0}{K_n+N_0} - \frac{N}{K_n+N} \right)
    \end{pmatrix}.
\end{align*}
As $N_0 \geq N$ and $O_0 \geq O$ so $\frac{\alpha T^2}{K} + \beta(O_0 - O)T + \gamma(N_0 - N)T \geq 0$ and $\delta K_h T \left( \frac{1}{K_h+O_0} - \frac{1}{K_h+O} \right) + \epsilon V \left( \frac{N_0}{K_n+N_0} - \frac{N}{K_n+N} \right) \geq 0$, hence $\bar{H}(X_1, X_2) \geq 0$. Clearly, $B$ is an $M$-matrix and hence both conditions are proved, so by Lemma 1, the disease-free equilibrium point $E_0$ is globally asymptotically stable. 
\end{proof} 
\subsection{Global Asymptotic Stability of the Endemic Equilibrium}
\begin{theorem}
The endemic equilibrium point $E^*(T^*, N^*, O^*, V^*)$ of the system is globally asymptotically stable in the interior of the positive orthant $\mathbb{R}_+^4$ if $R_0 > 1$, provided the system constants satisfy the following constraints.
\begin{align}
    \sigma_1 > \epsilon, \qquad \mu_1 + \eta_1 T^* > \gamma_1 V^*, \qquad \text{and} \qquad \mu_2 + \eta_2 T^* > \beta_1 V^*
    \label{eq:GAS_conditions}
\end{align}
\end{theorem}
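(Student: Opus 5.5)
The plan is a Lyapunov function of Volterra (Goh) logarithmic type, centred at $E^*$, which is the standard tool for global stability of an interior equilibrium in a Kolmogorov-type system. I would set
\begin{equation*}
\mathcal{L}(T,N,O,V)=c_1\Bigl(T-T^*-T^*\ln\tfrac{T}{T^*}\Bigr)+c_2\Bigl(N-N^*-N^*\ln\tfrac{N}{N^*}\Bigr)+c_3\Bigl(O-O^*-O^*\ln\tfrac{O}{O^*}\Bigr)+c_4\Bigl(V-V^*-V^*\ln\tfrac{V}{V^*}\Bigr),
\end{equation*}
with positive weights $c_i$ to be chosen. By the positivity theorem, trajectories from the interior stay in the interior, so $\mathcal{L}$ is defined along them. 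It is nonnegative and vanishes only at $E^*$. By the boundedness theorem the trajectories also lie in the compact set $\Omega$, which is what LaSalle's invariance principle needs. Existence of $E^*$ with positive coordinates comes from $R_0>1$: together with \eqref{eq:T_star} it forces $T^*>0$. The three constraints in \eqref{eq:GAS_conditions} are exactly what make the denominators in \eqref{eq:V_star}--\eqref{eq:O_star} positive, so $V^*,N^*,O^*>0$.

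Next I would differentiate along solutions and rewrite each equation using the steady-state identities \eqref{eq:sub1}--\eqref{eq:sub4}. For the tumor equation this gives
\begin{equation*}
\dot T=T\Bigl[-\tfrac{\alpha}{K}(T-T^*)+\beta(O-O^*)+\gamma(N-N^*)\Bigr].
\end{equation*}
For the nutrient equation, substituting $\Lambda_n=(\mu_1+\eta_1T^*-\gamma_1V^*)N^*$ gives
\begin{equation*}
\dot N=\Lambda_n\Bigl(1-\tfrac{N}{N^*}\Bigr)+N\bigl[\gamma_1(V-V^*)-\eta_1(T-T^*)\bigr],
\end{equation*}
and the oxygen equation is treated the same way. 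The source terms then produce the definitely negative contributions $-c_2\Lambda_n (N-N^*)^2/(NN^*)$ and $-c_3\Lambda_O(O-O^*)^2/(OO^*)$. The tumor equation contributes $-c_1\frac{\alpha}{K}(T-T^*)^2$.

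The cross terms are handled next. Choosing $c_2=c_1\gamma/\eta_1$ and $c_3=c_1\beta/\eta_2$ cancels the $(T-T^*)(N-N^*)$ and $(T-T^*)(O-O^*)$ products exactly. This is the same weighting used for $Z(t)$ in the boundedness proof. What remains is $c_2\gamma_1(N-N^*)(V-V^*)+c_3\beta_1(O-O^*)(V-V^*)$ plus the contribution of the $V$-equation.

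The main obstacle is the $V$-equation, because of its two saturating nonlinearities and the non-Kolmogorov source $\delta\frac{K_h}{K_h+O}T$. I would write
\begin{equation*}
\dot V=\delta T^*\tfrac{K_h}{K_h+O^*}\Bigl[\tfrac{T(K_h+O^*)}{T^*(K_h+O)}-\tfrac{V}{V^*}\Bigr]+\epsilon V\Bigl[\tfrac{N}{K_n+N}-\tfrac{N^*}{K_n+N^*}\Bigr].
\end{equation*}
Multiplying by $(1-V^*/V)$ yields a term of the form $\delta\frac{K_h}{K_h+O^*}T^*\bigl(2-x-\tfrac1x\bigr)$-type plus mixed terms. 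I would first try the arithmetic–geometric mean inequality to control this part. For the remaining mixed terms I would use Young's inequality, e.g. $|ab|\le \tfrac{\theta}{2}a^2+\tfrac{1}{2\theta}b^2$, absorbing them into the negative quadratic terms. Here $\sigma_1>\epsilon$ gives a net negative $-(\sigma_1-\epsilon)(V-V^*)^2$-type term in $V$, and the bound $\frac{K_n}{(K_n+N)(K_n+N^*)}\le 1/K_n$ controls the $N$-coupling. Choosing $c_4$ small enough makes the resulting quadratic form negative definite; this choice is possible because the other constraints keep the coefficients $\Lambda_n/N^*$ and $\Lambda_O/O^*$ bounded away from zero on $\Omega$.

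If this succeeds, $\dot{\mathcal{L}}\le0$ with equality only at $E^*$. LaSalle's invariance principle then gives global asymptotic stability in the interior of $\mathbb{R}^4_+$. I expect that making the $V$-block estimate fully rigorous may require an extra quantitative smallness condition beyond \eqref{eq:GAS_conditions}, and this is the step I would scrutinize most carefully.
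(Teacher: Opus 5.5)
Your plan is the paper's own proof: the same Volterra function, the same weights $c_2=c_1\gamma/\eta_1$ and $c_3=c_1\beta/\eta_2$, then Young's inequality, $\sigma_1>\epsilon$ and LaSalle. It breaks at exactly the step you flagged, which the paper only asserts. After the weight choice, every remaining cross term contains $\Delta V$. In your grouping, the only negative $(\Delta V)^2$ term is $-c_4 s(\Delta V)^2/V$ with $s=\sigma_1-\epsilon h(N^*)$ and $h(N)=N/(K_n+N)$. The term $c_2\gamma_1\Delta N\Delta V$ does not involve $c_4$. Absorbing it into the $(\Delta N)^2$ term $-c_2\Lambda_n(\Delta N)^2/(NN^*)$ and the $(\Delta V)^2$ term therefore requires at least $c_4/V\ge c_2\gamma_1^2NN^*/(4\Lambda_n s)$, so taking $c_4$ small goes the wrong way. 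The term $c_4\delta q(O)\Delta T\Delta V/V$, with $q(O)=K_h/(K_h+O)$, requires $c_4/V\le 4c_1\alpha s/(K\delta^2q(O)^2)$. This upper bound stays bounded on $\Omega$, while $c_4/V\to\infty$ as $V\to0$, so the estimate fails near $V=0$ for every $c_4>0$. Even where $V$ is bounded below, the two bounds are compatible only if $\gamma\gamma_1^2\delta^2q(O)^2KNN^*\le 16\eta_1\alpha\Lambda_n s^2$. That is a smallness condition on the positive feedback loop $T\to V\to N\to T$, and it appears nowhere in the hypotheses. AM--GM does not rescue this. The $\delta$-part of the $V$-contribution is $c_4\delta q(O^*)T^*\bigl[(2-y-1/y)+(x-1)(1-1/y)\bigr]$ with $x=q(O)T/(q(O^*)T^*)$ and $y=V/V^*$, and the mixed part is exactly the troublesome cross term. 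Nor do the hypotheses supply the missing control. At any positive equilibrium, $\mu_1+\eta_1T^*-\gamma_1V^*=\Lambda_n/N^*$ and $\mu_2+\eta_2T^*-\beta_1V^*=\Lambda_O/O^*$, so the last two constraints hold automatically and only $\sigma_1>\epsilon$ is a real assumption.

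This is not a technicality: under the stated hypotheses the theorem is false, so no choice of weights can close the argument. For transparency take $\beta=\beta_1=\epsilon=0$ and $K_h\to\infty$ (so $q\equiv1$). Set $\alpha=K=\gamma=1$, $\sigma+\mu=1$, $\Lambda_n=0.1$, $\mu_1=\eta_1=\delta=\sigma_1=1$ and $\gamma_1=3$, with arbitrary positive $\Lambda_O,\mu_2,\eta_2$. Then $R_0=1.1$, and every endemic equilibrium has $V^*=N^*=T^*$ with $N^*=0.1/(1-2T^*)$. There are exactly two, $T^*=(5\pm\sqrt5)/20$, and both satisfy all three constraints because $T^*<1/2$. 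Two interior equilibria already rule out global stability of either. At the larger one, $\det J=-T^*(\mu_2+\eta_2T^*)(2N^*-\Lambda_n/N^*)<0$, so it is unstable. Both equilibria are nondegenerate, hence persist for small positive $\beta,\beta_1,\epsilon$ and large finite $K_h$. With $\Lambda_n=0.2$ instead, $R_0=1.2$ but there is no endemic equilibrium at all, so your claim that $R_0>1$ produces $E^*$ also fails. A correct statement needs an explicit extra hypothesis bounding the feedback gains $\gamma_1,\beta_1,\delta$. One option is positive definiteness of the coefficient matrix obtained by using the quadratic term $\tfrac{c_4}{2}(V-V^*)^2$ in place of the logarithmic one, which also removes the $1/V$ singularity. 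It also needs a real boundedness argument: the paper's proof of the boundedness theorem drops the terms $\gamma_1VN$ and $\beta_1VO$, so it cannot be quoted to supply $\Omega$.
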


\begin{proof}
To analyze the global asymptotic profile, we define the exact Volterra-type nonlinear Lyapunov function $W(T, N, O, V) = \sum_{i=1}^{4} W_i$. 
\begin{align}
    W = \omega_1 \left( T - T^* - T^* \ln \frac{T}{T^*} \right) &+ \omega_2 \left( N - N^* - N^* \ln \frac{N}{N^*} \right) \nonumber \\
    +\, \omega_3 \left( O - O^* - O^* \ln \frac{O}{O^*} \right) &+ \omega_4 \left( V - V^* - V^* \ln \frac{V}{V^*} \right)
\end{align}
where, $\omega_i > 0$ ($i=1,2,3,4$) are positive constant weights. Differentiating $W$ with respect to $t$ we get,
\begin{align}
    \frac{dW}{dt} = \omega_1 \frac{\Delta T}{T} \frac{dT}{dt} + \omega_2 \frac{\Delta N}{N}\frac{dN}{dt} + \omega_3 \frac{\Delta O}{O}\frac{dO}{dt} + \omega_4 \frac{\Delta V}{V}\frac{dV}{dt}
\end{align}
where, $\Delta T = T-T^*$, $\Delta N = N-N^*$, $\Delta O = O-O^*$, and $\Delta V = V-V^*$.

Now, substituting endemic steady-state relations
\begin{align*}
     \sigma +\mu&=\alpha\left(1-\frac{T^*}{K}\right) + \beta O^* + \gamma N^* \\ 
\Lambda_n &= \mu_1 N^* + \eta_1 T^* N^* - \gamma_1 V^* N^* \\
\Lambda_O &= \mu_2 O^* + \eta_2 T^* O^* - \beta_1 V^* O^*
\end{align*}
we get,
\begin{align*}
    \dot W_T &= -\omega_1\frac{\alpha}{K}(\Delta T)^2 +\omega_1\beta\Delta T\Delta O +\omega_1\gamma\Delta T\Delta N \\
    \dot W_N &= -\omega_2 \frac{\mu_1+\eta_1T^*-\gamma_1V^*}{N} (\Delta N)^2 +\omega_2\gamma_1\Delta N\Delta V -\omega_2\eta_1\Delta T\Delta N \\
    \dot W_O &= -\omega_3 \frac{\mu_2+\eta_2T^*-\beta_1V^*}{O} (\Delta O)^2 +\omega_3\beta_1\Delta O\Delta V -\omega_3\eta_2\Delta T\Delta O
\end{align*}

Let, $q(O)=\frac{K_h}{K_h+O}$ and  $h(N)=\frac{N}{K_n+N}$. Then,
\begin{align}
    \dot W_V = \,&\omega_4\delta \frac{\Delta V}{V} \left[ q(O)\Delta T - \frac{K_hT^*\Delta O} {(K_h+O)(K_h+O^*)} \right] \nonumber \\
    &+\omega_4\epsilon \frac{\Delta V}{V} \left[ h(N)\Delta V + \frac{K_nV^*\Delta N} {(K_n+N)(K_n+N^*)} \right] -\omega_4\sigma_1 \frac{(\Delta V)^2}{V}
\end{align}

Finally,
\begin{align}
    \frac{dW}{dt}= \,&-\omega_1\frac{\alpha}{K}(\Delta T)^2 -\omega_2 \frac{\mu_1+\eta_1T^*-\gamma_1V^*}{N} (\Delta N)^2 -\omega_3 \frac{\mu_2+\eta_2T^*-\beta_1V^*}{O} (\Delta O)^2 \nonumber \\
    &-\omega_4 \frac{\sigma_1 - \epsilon h(N)}{V} (\Delta V)^2 + (\omega_1\gamma-\omega_2\eta_1)\Delta T\Delta N + (\omega_1\beta-\omega_3\eta_2)\Delta T\Delta O \nonumber \\
    &+ \omega_2\gamma_1\Delta N\Delta V + \omega_3\beta_1\Delta O\Delta V + \omega_4\delta q(O) \frac{\Delta T\Delta V}{V} \nonumber \\
    &- \omega_4\delta \frac{K_hT^*}{(K_h+O)(K_h+O^*)} \frac{\Delta O\Delta V}{V} + \omega_4\epsilon \frac{K_nV^*}{(K_n+N)(K_n+N^*)} \frac{\Delta N\Delta V}{V}
    \label{eq:exact_Lyapunov}
\end{align}

To resolve the system stability, we choose the constant weights such that $\omega_1\gamma = \omega_2\eta_1$ and $\omega_1\beta = \omega_3\eta_2$, which successfully eliminates the linear $T-N$ and $T-O$ bilinear cross-terms. Since $\sigma_1 > \epsilon$ ensures that the vascular quadratic component remains strictly negative semidefinite.

To bounds the remaining state-dependent nonlinear cross-coupling terms, we apply Young's inequality ($2ab \leq \varepsilon_i a^2 + \frac{1}{\varepsilon_i}b^2$). By selecting bounded positive coefficients $\varepsilon_1, \varepsilon_2,$ and $\varepsilon_3$, these cross-terms are completely absorbed into the negative quadratic coefficients of $(\Delta T)^2$, $(\Delta N)^2$, and $(\Delta O)^2$.\\
Under the parametric bounds established in \eqref{eq:GAS_conditions},
\begin{align}
    \frac{dW}{dt} \leq -D_1 (\Delta T)^2 - D_2 (\Delta N)^2 - D_3 (\Delta O)^2 - D_4 (\Delta V)^2 \leq 0
\end{align}
Now, it is evident that $\frac{dW}{dt} = 0$ holds if and only if $T=T^*$, $N=N^*$, $O=O^*$, and $V=V^*$. Thus, by LaSalle's Invariant Principle, the endemic equilibrium point $E_1$ is globally asymptotically stable under the specified conditions.
\end{proof}

\subsection{Convergence Analysis of Tumor Cells Under Constant Microenvironmental Resources }
To evaluate the asymptotic capability of the tumor cell population under controlled microenvironmental conditions such as continuous oxygenation at a certain level or targeted nutrient perfusion, we utilize a partial decoupling technique. We assume that the microenvironmental nutrient and oxygen levels are held at constant states, denoted by $N = N_c$ and $O = O_c$, respectively. Substituting these constant bounds into the governing equation \eqref{T'} yields
\begin{align}
	\frac {dT}{dt} &= \alpha T\left(1-\frac{T}{K}\right)+\beta O_c T+\gamma N_c T-\sigma T -\mu T \nonumber \\
	&= \left[ (\alpha + \beta O_c + \gamma N_c) - (\sigma + \mu) \right] T - \frac{\alpha}{K}T^2
	\label{eq:T_decouple}
\end{align}
Let us define $\Gamma_c = (\alpha + \beta O_c + \gamma N_c) - (\sigma + \mu)$. Equation \eqref{eq:T_decouple} then reduces to a standard Bernoulli differential equation of logistic representation.
\begin{align}
    \frac {dT}{dt} = \Gamma_c T - \frac{\alpha}{K}T^2 \implies \frac{dT}{T \left( \Gamma_c - \frac{\alpha}{K}T \right)} = dt
    \label{eq:T_separable}
\end{align}
To solve equation \eqref{eq:T_separable}, we resolve the integrand on the left-hand side into partial fraction decomposition as follows:
\begin{align}
    \frac{1}{T \left( \Gamma_c - \frac{\alpha}{K}T \right)} = \frac{1}{\Gamma_c} \left[ \frac{1}{T} + \frac{\frac{\alpha}{K}}{\Gamma_c - \frac{\alpha}{K}T} \right]
\end{align}
Integrating both sides of equation \eqref{eq:T_separable} from the initial state $t=0$ where $T(0) = T_0$ to any arbitrary time $t$, we obtain:
\begin{align}
    \frac{1}{\Gamma_c} \int_{T_0}^{T(t)} \left[ \frac{1}{\xi} + \frac{\frac{\alpha}{K}}{\Gamma_c - \frac{\alpha}{K}\xi} \right] d\xi = \int_{0}^{t} d\tau
\end{align}

\begin{align}
    \implies T(t) = \frac{\frac{K \Gamma_c}{\alpha}}{1 + \left( \frac{K \Gamma_c}{\alpha T_0} - 1 \right) e^{-\Gamma_c t}}
\end{align}
By taking, $t \to \infty $, 
\begin{align}
    \lim_{t \to \infty} T(t) = \frac{K \Gamma_c}{\alpha} = K \left[ 1 + \frac{\beta O_c + \gamma N_c - (\sigma + \mu)}{\alpha} \right]
\end{align}
This result shows a direct connection between tumor growth and microenvironmental resources. Biologically, continuous oxygen and nutrients act as fuel for cancer cells, causing the tumor to expand. However, this growth is completely dependent on these resources. If we apply successful therapies to reduce the level of oxygen and nutrients in the surroundings, the tumor cells will naturally lose their support. Consequently, the tumor cell population will gracefully decline as the resources decline also, while being systematically suppressed by clinical therapies that amplify cellular decay parameters ($\sigma, \mu$).

\subsection{Sensitivity Analysis}
The sensitivity index quantifies the significance of a parameter and the extent to which a parameter influences the model behavior that represents future tumor progression status.
The normalized forward sensitivity index of a variable $P$ with respect to a parameter $\zeta$, is defined as
\begin{align}
   \mathcal{G}_\zeta ^P =\frac{\partial P}{\partial \zeta} \times \frac{\zeta}{P}
    \label{Formula of sensitivity}
\end{align}
In general, the sensitivity indices vary from $-1$ to $1$, Specifically,
\begin{multicols}{2} 
\begin{align*}
   &\mathcal{G}_\alpha ^{\mathcal{R}_0} =\frac{\partial {\mathcal{R}_0}}{\partial \alpha} \times \frac{\alpha}{{\mathcal{R}_0}} =0.95890
   \\
&\mathcal{G}_{\beta} ^{\mathcal{R}_0} =\frac{\partial {\mathcal{R}_0}}{\partial {\beta}} \times \frac{{\beta}}{{\mathcal{R}_0}} =0.01370
\\
&\mathcal{G}_{\gamma} ^{\mathcal{R}_0} =\frac{\partial {\mathcal{R}_0}}{\partial {\gamma}} \times \frac{{\gamma}}{{\mathcal{R}_0}} =0.02740
\\
&\mathcal{G}_{\Lambda_O} ^{\mathcal{R}_0} =\frac{\partial {\mathcal{R}_0}}{\partial {\Lambda_O}} \times \frac{{\Lambda_O}}{{\mathcal{R}_0}} =0.01370
\\
&\mathcal{G}_{\mu_2} ^{\mathcal{R}_0} =\frac{\partial {\mathcal{R}_0}}{\partial {\mu_2}} \times \frac{{\mu_2}}{{\mathcal{R}_0}} =-0.01370
\\
&\mathcal{G}_{\Lambda_n} ^{\mathcal{R}_0} =\frac{\partial {\mathcal{R}_0}}{\partial {\Lambda_n}} \times \frac{{\Lambda_n}}{{\mathcal{R}_0}} =0.02740
\\
&\mathcal{G}_{\mu_1} ^{\mathcal{R}_0} =\frac{\partial {\mathcal{R}_0}}{\partial {\mu_1}} \times \frac{{\mu_1}}{{\mathcal{R}_0}} =-0.02740
\\
&\mathcal{G}_\sigma ^{\mathcal{R}_0} =\frac{\partial {\mathcal{R}_0}}{\partial \sigma} \times \frac{\sigma}{{\mathcal{R}_0}} =-0.16327
\\
&\mathcal{G}_\mu ^{\mathcal{R}_0} =\frac{\partial {\mathcal{R}_0}}{\partial \mu} \times \frac{\mu}{{\mathcal{R}_0}} =-0.83673
\end{align*}
\end{multicols}

\begin{table}[hbt!]
    \centering
    \caption{Sensitivity Indices of $\mathcal{R}_0$ with respect to parameters}
    \setlength{\tabcolsep}{15pt}
    \renewcommand{\arraystretch}{2.0}
    \begin{tabular}{c c c c}
    \hline
      \textbf{Parameter} & \textbf{Index} & \textbf{Parameter} & \textbf{Index}\\
      \hline  
$\alpha$ & $0.95890$ & $\beta$ & $0.01370$ \\ 
$\gamma$& $0.02740$ & $\Lambda_O$& $0.01370$\\
$\mu_2$& $-0.01370$ & $\Lambda_n$ &$0.02740$\\
$\mu_1$ & $-0.02740$ & $\sigma$ &$-0.16327$\\
$\mu$ & $-0.83673$ & & \\
\hline
           \end{tabular}
    \label{Sensitivity_Table}
\end{table}

The sensitivity indices of the model parameters are illustrated graphically in Figure \ref{fig:sensitivity}, based on the values calculated in Table \ref{Sensitivity_Table}. In our proposed model, $\mathcal{R}_0 < 1$ represents the threshold where the tumor cells are successfully eradicated and vanish over time, whereas $\mathcal{R}_0 > 1$ implies that the malignancy will be sustained and progress in the microenvironment. 

The sensitivity indices for the parameters $\alpha, \beta, \gamma, \Lambda_O,$ and $\Lambda_n$ are positive. This demonstrates that reducing the values of these parameters such as lowering the cell division rate $\alpha$ or cutting off the baseline oxygen supply $\Lambda_O$—will pull down the reproduction number and directly lead to tumor extinction. In contrast, the sensitivity indices for the parameters $\mu_2, \mu_1, \sigma,$ and $\mu$ are negative. This suggests that elevating these values such as accelerating the natural death rate of tumor cells $\mu$ or promoting the regression of abnormal blood vessels—exerts a protective force that effectively helps to control tumor growth.

\begin{figure}[hbt!]
	\centering
	\includegraphics[height=8cm,width=10cm]{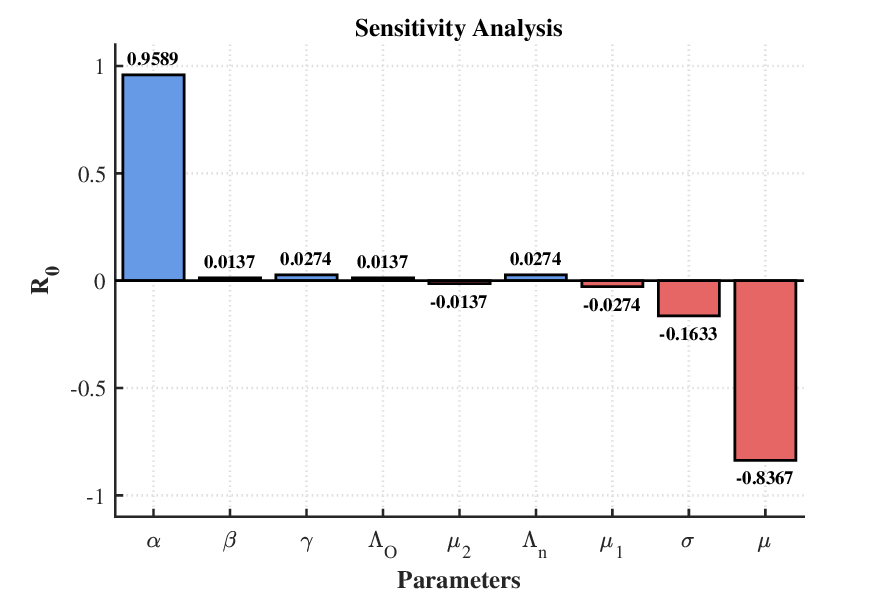}
	\caption{Effects of parameters on $\mathcal{R}_0$}
	\label{fig:sensitivity}
\end{figure}

\section{Numerical Simulation}
To analyze and solve our newly proposed mathematical model numerically, we used MATLAB(R2018a). We solved the system of non linear ordinary differential equations using initial conditions $T(0)=1.2\times10^{9}$ cells, $N(0)=2 molm^{-3}$, $O(0)=40 mmHg$ and $V(0)=15 vessels/mm^{2}$ with the parameter values from Table \ref{table:paramters value} for $5$ years ($60$ months).\\

\subsection{Trajectories and Variation of Parameters}
First, the trajectories or the solutions of our model will be discussed, including the effects of changing various parameters along with their biological meanings. The number of tumor cells $T(t)$ increases steadily until the $20$th month. After reaching close to its carrying capacity $K$, the curve flattens out and starts becoming stable. Biologically, this shows a classic logistic growth pattern. In the beginning, tumor cells grow quickly because they have enough space. But after $20$ months, due to space limitations and biological pressure, the tumor growth reaches its maximum limit and stabilizes at a steady state.
\begin{figure}[hbtp]
	\centering
  \begin{subfigure}{0.49\textwidth}
		\centering
	\includegraphics[width=\textwidth]{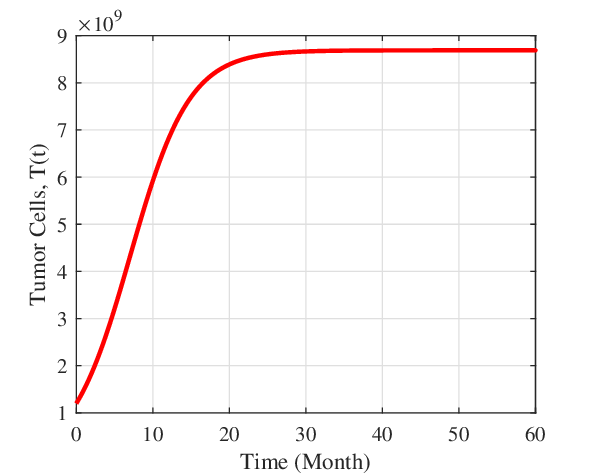}
     \caption{ }
 \label{figure:Tumor_trajectory}
	\end{subfigure} 
 \hfill
	\begin{subfigure}{0.49\textwidth}
\includegraphics[width=\textwidth]{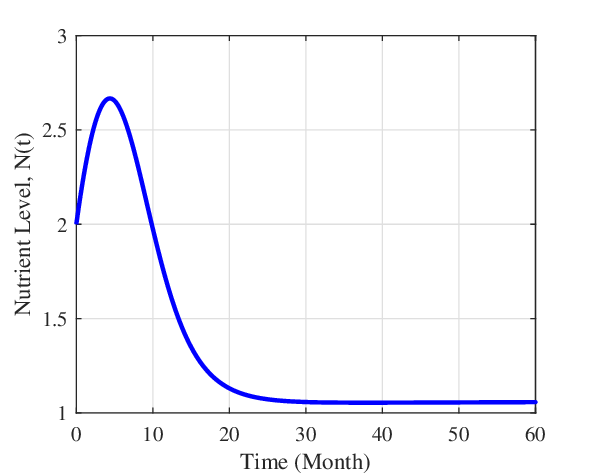}
\caption{ }
	\end{subfigure} 
 \caption{\textbf{(a)} The number of tumor cells $T(t)$ is growing sharply till 20th month. Depending on the natural growth rate $\alpha$, tumor cells reach near the maximum number and stabilize.  \textbf{(b)} Nutrient level N(t) decline to around $1 molm^{-3}$ over the time while tumor cells are growing.}
 \label{Trajectory_T_N}
\end{figure}
\\
In the case of the Nutrient level $N(t)$, the graph shows a small increase during the first $5$ months, reaching up to $2.6molm^{-3}$. This happens because, at the very beginning, the number of tumor cells is small, and they also get nutrition from stromal cells. However, after $5$ months, as the tumor cells grow relentlessly, the nutrient level starts declining steeply and drops close to $1$. This indicates that when tumor cells reach their maximum level and multiply rapidly, the microenvironment suffers from severe nutrient starvation because the tumor cells consume nutrition much faster than the body can supply.\\

\noindent 
\begin{figure}[hbtp]
	\centering
  \begin{subfigure}{0.49\textwidth}
		\centering
	\includegraphics[width=\textwidth]{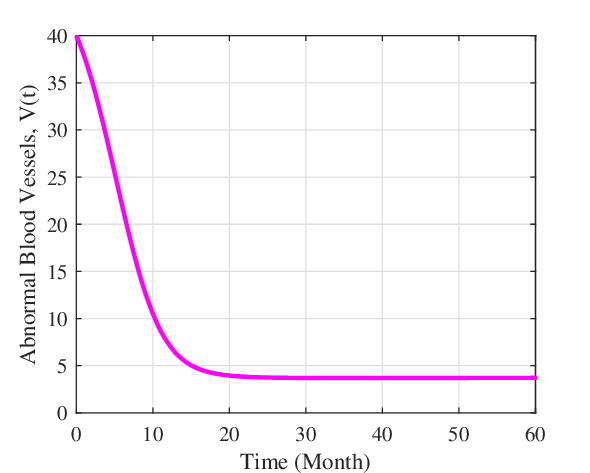}
     \caption{ }
	\end{subfigure} 
 \hfill
	\begin{subfigure}{0.49\textwidth}
\includegraphics[width=\textwidth]{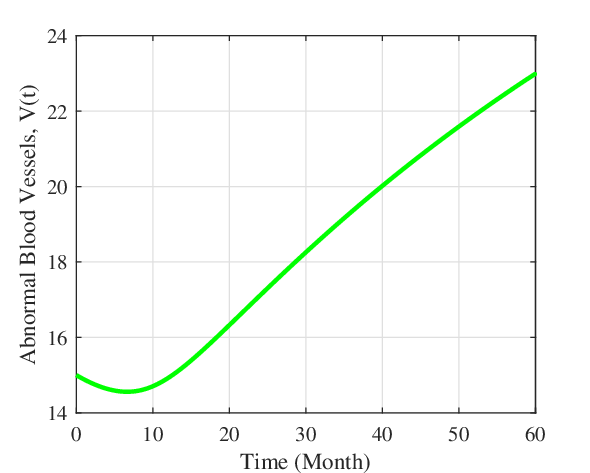}
\caption{ }
	\end{subfigure} 
 \caption{\textbf{(a)} Initially, Oxygen level falls progressively up to just over $10$ months, but after that the declining rate becomes gradual, leading to severe hypoxia in the tumor microenvironment. \textbf{(b)} Abnormal Blood vessels increase along with the reduction of Oxygen level.}
 \label{Trajectory_O_V}
\end{figure}
Similarly, as the tumor cells grow, the Oxygen level $O(t)$ starts declining immediately from $40\text{ mmHg}$. With their increasing size, tumor cells form a dense, solid mass. To meet the rising metabolic demand of these elevated cells, the oxygen level drops rapidly and reaches around $3.8mmHg$ by the $20$th month. This low value indicates a severe hypoxic environment, which is closely associated with the aggressiveness of the tumor and poor clinical outcomes.

Consequently, this severe hypoxia promotes tumor-induced angiogenesis. As the oxygen level drops below a certain level, new blood vessels $V(t)$ are created to save the cancer cells. This is why we see a sharp upward turn in the vessel graph after the $5$th month. Since most of these tumor-induced vessels are abnormal, disorganized, and leaky, they cannot bring uniform blood flow. As a result, an irregular vasculature network is formed in the tumor microenvironment, leading to an uneven and heterogeneous supply of oxygen and nutrients to the cells.
\noindent 
\begin{figure}[hbtp]
	\centering
  \begin{subfigure}{0.49\textwidth}
		\centering
	\includegraphics[width=\textwidth]{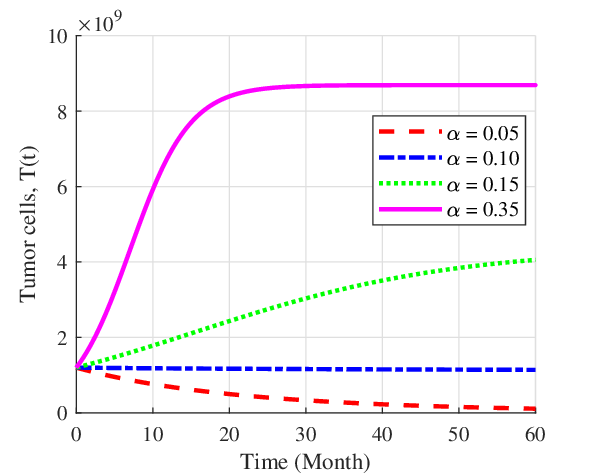}
     \caption{ }
 \label{figure:T_param}
	\end{subfigure} 
 \hfill
 \begin{subfigure}{0.49\textwidth}
		\centering
	\includegraphics[width=\textwidth]{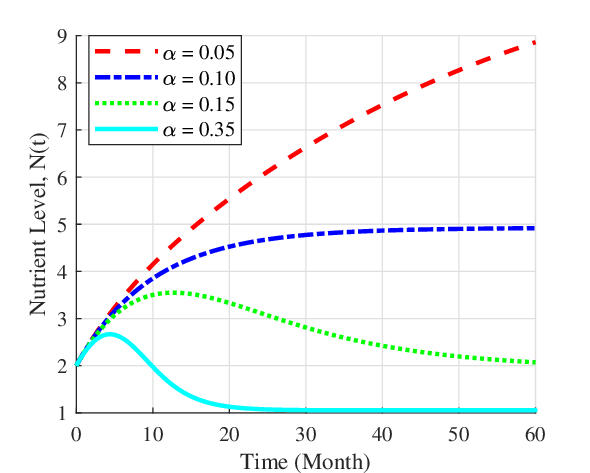}
     \caption{ }
 \label{figure:N_param}
	\end{subfigure} 
    \hfill
    \begin{subfigure}{0.49\textwidth}
		\centering
	\includegraphics[width=\textwidth]{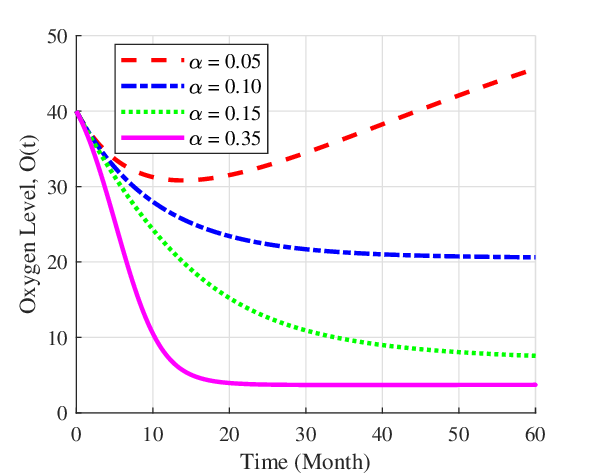}
     \caption{ }
 \label{figure:O_param}
	\end{subfigure}
    \hfill
	\begin{subfigure}{0.49\textwidth}
\includegraphics[width=\textwidth]{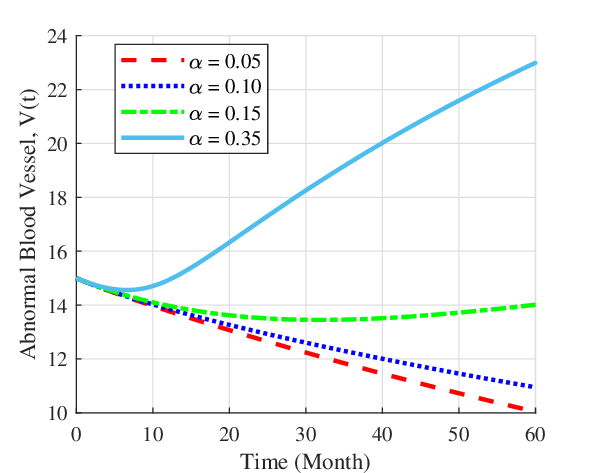}
\caption{ }
	\end{subfigure} 
 \caption{\textbf{(a)} The number of tumor cells $T(t)$ grows when $\alpha>0.10$ and bellow this value it start declining.  \textbf{(b)} Nutrient level N(t) decreases with the higher value of $\alpha$ and when the tumor cells start declining, it starts increasing.\textbf{(c)} A similar pattern to the nutrient level is observed in the case of Oxygen level for different values of $\alpha$. \textbf{(d)} Abnormal Blood vessels increase for the highest value of $\alpha=0.35$ otherwise it declines gradually.}
 \label{Parameter_Variation_alpha}
\end{figure}
\\

Based on different values of the growth parameter $\alpha$, the behavior of the system changes significantly. The natural growth rate $\alpha$ determines both the maximum level of tumor cells and the time required to reach it. When the value of $\alpha$ is high, such as $\alpha = 0.35$, the tumor grows rapidly and reaches a stable steady state after approximately $20$ to $25$ months. On the other hand, when $\alpha = 0.15$, the tumor cells grow moderately and continue to increase past $20$ months without flattening early. However, it is observed that the tumor cells start to decline when the value of $\alpha$ drops below $0.09$. This shows that if the growth rate is below this value, the tumor cells cannot sustain themselves and disappear over time. Biologically, the growth rate can be obtained by the presence of a proliferation protein named $Ki-67$, which is highly active in areas where tumor cells continuously divide. Depending on this growth pattern, Adenoid Cystic Carcinoma (ACC) can be characterized as either a slow-growing or fast-growing tumor.

The resource dynamics in the microenvironment follow a complementary pattern. As the tumor cells grow relentlessly, they consume nutrients and oxygen to sustain their survival. Therefore, a higher value of $\alpha$ leads to a more severe drop in microenvironmental resources. For $\alpha > 0.09$, the nutrient level drops continuously over time. However, when $\alpha$ is below this value and the tumor cells start declining, the nutrient level increases and recovers because there are fewer cells to consume it. An identical pattern is observed in the case of the oxygen level, where a high growth rate ($\alpha = 0.35$) triggers severe hypoxia, dropping the oxygen concentration close to zero, while a low growth rate allows the oxygen level to stay high or recover. 

Finally, the abnormal blood vessels respond dynamically to these changes. For lower growth rates $\alpha \leq 0.10$, where hypoxia is not severe, the number of abnormal vessels gradually decays over the $60$-month period. In contrast, for a high growth rate $\alpha = 0.35$, the severe drop in oxygen levels activates a strong angiogenic response, causing the abnormal blood vessels to continuously grow and enlarge over time.

\begin{figure}[hbtp]
	\centering
  \begin{subfigure}{0.49\textwidth}
		\centering
	\includegraphics[width=\textwidth]{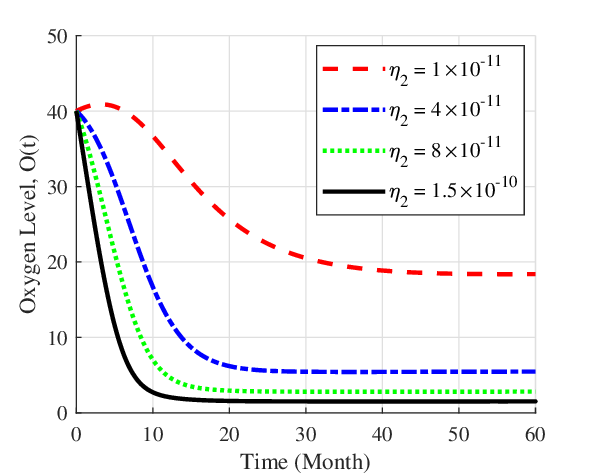}
     \caption{ }
 \label{figure:normoxia}
	\end{subfigure} 
 \hfill
	\begin{subfigure}{0.49\textwidth}
\includegraphics[width=\textwidth]{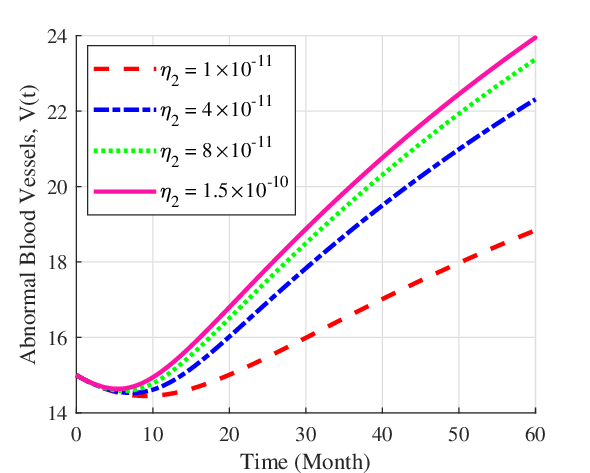}
\caption{ }
	\end{subfigure} 
 \caption{ Oxygen level and blood vessel dynamics vary according to the Variation in the  oxygen consumption rate by tumor cells $\eta_2$ }
 \label{Parameter_variation_gamma1}
\end{figure}

We also investigate the dynamic response of the microenvironment under different values of the oxygen consumption rate by the tumor cells $\eta_2$, as illustrated in Figure 3. The parameter $\eta_2$ determines how aggressively the tumor cells consume oxygen from their surroundings.

As shown in Figure 3(a), when the value of $\eta_2$ is high $\eta_2 = 1.5 \times 10^{-10}$, the tumor cells rapidly deplete the oxygen supply. Consequently, the oxygen level drops steeply within the first $10$ months and stays close to zero. On the other hand, when the consumption rate is low $\eta_2 = 1 \times 10^{-11}$, the drop is much more gradual and the oxygen level safely stabilizes around $18\text{ mmHg}$. This clearly shows that a higher metabolic appetite of tumor cells leads to a more severe and sudden hypoxic stress in the tissue.

This variation in the hypoxic stress directly influences the growth pattern of the vascular network, as depicted in Figure 3(b). For the highest consumption rate $\eta_2 = 1.5 \times 10^{-10}$, the extreme oxygen deprivation activates a powerful angiogenic signal. This causes the abnormal blood vessels $V(t)$ to grow sharply, reaching a maximum count of nearly $24vessels/mm^{2}$ by the $60$th month. In contrast, when the consumption rate is low, the signaling remains weak, restricting the vascular growth to a moderate level. Biologically, this simulation successfully demonstrates that tumors with higher metabolic demands create a more aggressive microenvironment, forcing extensive and rapid abnormal vascularization to sustain their own survival.

\subsection{Phase Portrait Analysis}
To evaluate the long-term qualitative shifts in the system dynamics, we illustrate the phase portraits under different parameter configurations in Figures \ref{Phase_alpha} and \ref{Phase_sigma1}. These trajectories visualize how variation in parameters can drastically alter the tumor microenvironment.

As shown in Figure \ref{Phase_alpha}, the system exhibits a highly sensitive transition depending on the natural tumor growth rate $\alpha$. In Figure \ref{Phase_alpha}(a), the phase portrait between the Nutrient level $N(t)$ and Abnormal Blood Vessels $V(t)$ is depicted, while Figure \ref{Phase_alpha}(b) shows the interaction between the Oxygen level $O(t)$ and Abnormal Blood Vessels $V(t)$. When the growth rate is kept low ($\alpha = 0.05$), the trajectories take a distinct downward path towards a state of tumor eradication, where oxygen levels recover up to $45\text{ mmHg}$ and abnormal blood vessels drop. Conversely, when $\alpha$ is elevated to $0.35$, the curves take a complete opposite turn, looping sharply upward. This mathematical pattern indicates that there is a critical value of cell proliferation rate which aggressively drives the microenvironment into severe hypoxia and extensive vascular progression.\\
A complementary response is observed in Figure \ref{Phase_sigma1} with variation in the vascular regression rate $\sigma_1$. When $\sigma_1 = 0.0015$, the low vascular regression allows tumor-associated blood vessels to persist, resulting in a higher vascular density in the microenvironment. In contrast, increasing $sigma_1$ to $0.15$ increases vascular regression and consequently reduces the persistence of abnormal blood vessels. Thus, the phase trajectories demonstrate that the balance between vascular formation and regression plays an important role in determining the vascular state of the tumor microenvironment.

\begin{figure}[hbtp]
	\centering
  \begin{subfigure}{0.49\textwidth}
		\centering
	\includegraphics[width=\textwidth]{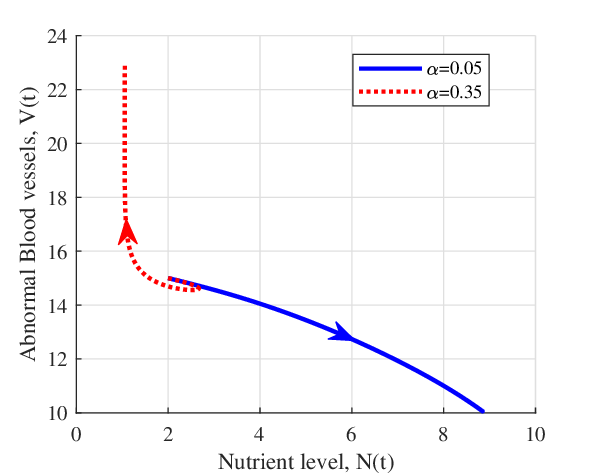}
     \caption{ }
	\end{subfigure} 
 \hfill
	\begin{subfigure}{0.49\textwidth}
\includegraphics[width=\textwidth]{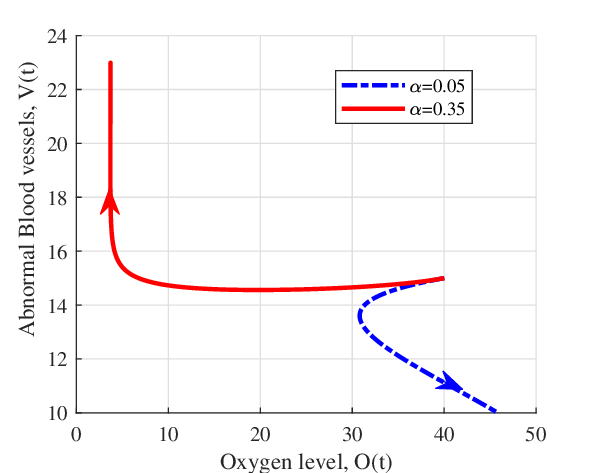}
\caption{ }
	\end{subfigure} 
 \caption{Phase-plane trajectories for two different values of \(\alpha\) in the  \(N\)-\(V\) and \(O\)-\(V\) planes, where the trajectories exhibit different directions, indicating a qualitative change in the behavior of the system}
 \label{Phase_alpha}
\end{figure}

\begin{figure}[hbtp]
	\centering
  \begin{subfigure}{0.49\textwidth}
		\centering
	\includegraphics[width=\textwidth]{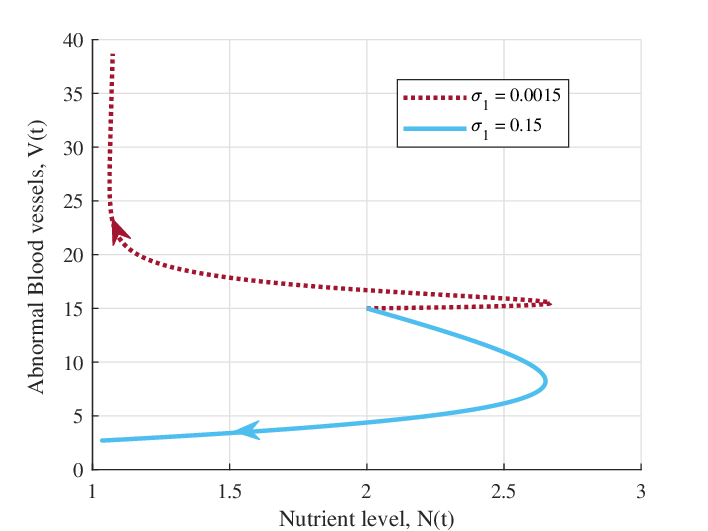}
     \caption{ }
	\end{subfigure} 
 \hfill
	\begin{subfigure}{0.49\textwidth}
\includegraphics[width=\textwidth]{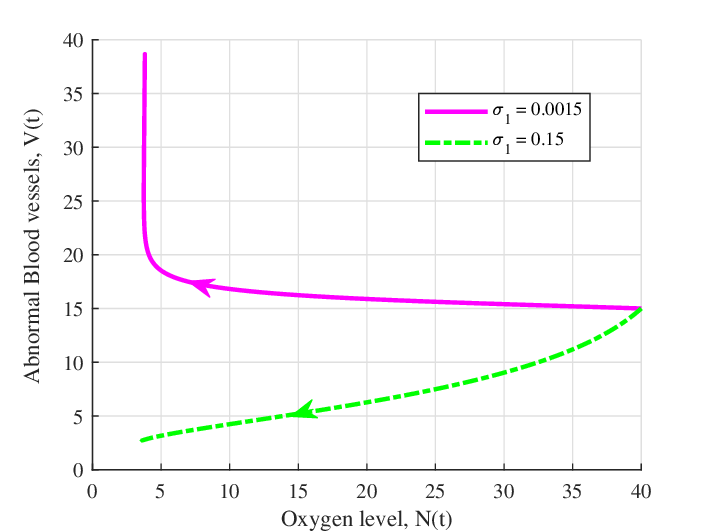}
\caption{ }
	\end{subfigure} 
 \caption{\textbf{(a)} Phase portrait between N(t), V(t) and O(t) and V(t) for the values of the parameter $\sigma_1=0.0015$ and $\sigma_1=0.15$ for which the system poses a two different behavior.}
 \label{Phase_sigma1}
\end{figure}

From a therapeutic perspective, the results obtained from the variation of $\alpha$ and $\sigma_1$ provide a possible interpretation of tumor control in Lacrimal Gland Adenoid Cystic Carcinoma. The phase portraits suggests that maintaining the effective tumor growth rate $\alpha$ below a critical level may prevent the system from remaining in a highly proliferative tumor state. In clinical practice, reduction of tumor burden through surgical intervention followed by appropriate radiotherapy may therefore be interpreted within the model as a strategy for reducing the effective tumor growth and promoting better tumor control.\\ 

Similarly, the value of $\sigma_1$ determines the persistence of tumor associated blood vessels. A low $\sigma_1$ corresponds to limited vascular regression, allowing abnormal vessels to persist and maintain the supply of oxygen and nutrients to the tumor. In contrast, increasing $\sigma_1$ promotes vascular regression and consequently reduces vascular support within the tumor microenvironment. Therefore, the model suggests that targeting tumor-associated vascularization through anti-angiogenic strategies may provide an additional approach for limiting tumor progression. Overall, the findings indicate that controlling both tumor growth and the vascular regression rate may achieve better control of LGACC.
\section{Treatment of Lacrimal Gland Adenoid Cystic Carcinoma (LGACC): Modeling Approach Using Surgery and Radiotherapy}

Lacrimal Gland Adenoid Cystic Carcinoma (LGACC) is a rare and highly aggressive malignant tumor characterized by a high propensity for perineural invasion and distant metastasis. The standard treatment strategy for advanced stages involves surgical resection such as orbital exenteration and eye-preserving surgery followed by adjuvant Postoperative Radiation Therapy. A recent study reported that eye-sparing surgery is increasingly considered a potential treatment strategy for patients with relatively smaller tumors and limited orbital involvement. In contrast, more aggressive features, such as larger tumor size, bone destruction, and perineural tumor spread, were more frequently associated with the need for orbital exenteration \cite{Eye_preserving_26}.

\subsection{Formulation of the Mathematical Model with Interventions}
To construct a realistic clinical treatment timeline, we assume that surgery removes $95\%$ of the bulk tumor cells, leaving a $5\%$ residual tumor burden behind. Although surgical intervention aims to remove the tumor completely, microscopic tumor cells often remain trapped within the orbits of eyes or nerve tracks. Furthermore, clinical guidelines suggest that adjuvant radiotherapy is initiated after approximately 6-8 weeks of surgery to allow for wound healing. Ciric and Sersa \cite{Eva2010} suggested that combining radiotherapy with vascular-disrupting agents can damage abnormal tumor blood vessels and thereby modify the tumor microenvironment, potentially enhancing the sensitivity of tumor cells to radiotherapy. So we consider that the effect of radiotherapy with vascular-disrupting agents can destroy abnormal blood vessels too.

Consequently, the updated mathematical model incorporating treatment is formulated as follows:

\begin{equation}
	\frac {dT}{dt}= \alpha T\left(1-\frac{T}{K}\right)+\beta OT+\gamma NT-\sigma T -\mu T - \xi_{20} T - \psi_{22}T
	\label{T_treat}
\end{equation}

\begin{equation}
	\frac {dN}{dt}= \Lambda_n +\gamma_1 VN -\eta_1 T N-\mu_1N
	\label{N_treat}
\end{equation}

\begin{equation}
	\frac {dO}{dt}=\Lambda_O+\beta_1 VO-\eta_2 T O-\mu_2O
	\label{O_treat}
\end{equation}

\begin{equation}
	\frac {dV}{dt}=\delta\left(\frac{K_h} {K_h+O}\right) T+\epsilon \left(\frac{N}{K_n+N}\right)V-\sigma_1 V - \phi_{22} V
	\label{V_treat}
\end{equation}

Where, the destruction rate of tumor cells due to surgery in the 20th month and radiotherapy after 22th month are represented by $\xi_{20}$ and $\psi_{22}$, respectively. $\phi_{22}$ denotes the disruption rate of abnormal blood vessels due to radiation exposure after 22th month.\\

Basically, Radiotherapy for ACC is commonly delivered as a fractionated treatment regimen, with a total prescribed dose of approximately 60 Gy and a conventional daily fraction of 2 Gy. Accordingly, in our model, radiotherapy is assumed to be applied for 5 consecutive days, followed by a 3-day treatment break after two months of surgery. This treatment schedule is repeated for a total of six cycles. Although, in clinical practice, each radiotherapy session is delivered for a limited duration of approximately 10–15 minutes per day, our model assumes the effect of radiotherapy to be continuous during each treatment period. This simplification is one of the limitations of our model. \\ 

Firstly, surgical intervention is performed at the 20th month.radiotherapy. As the surgery removes $95\%$ of the tumor burden, leaving $5\%$ of the tumor cells as a residual population. In Figure \ref{fig:only_surgery}, it is shown that the residual tumor cells initially remain at a low level after surgery but gradually begin to regrow in the absence of postoperative radiotherapy. This behavior reflects the highly aggressive nature of the tumor, which may exhibit local recurrence and, in some cases, distant metastasis too.

Following surgery, a temporary improvement in the tumor microenvironment can also be observed. With the substantial reduction in tumor burden, the severe hypoxic and nutrient-deprived conditions begin to recover, while the tumor-associated abnormal vasculature also decreases. Consequently, oxygen and nutrient availability can improve for a certain period. However, as the residual tumor cells proliferate again, the adverse microenvironment is gradually re-established, leading to renewed hypoxia, nutrient deprivation, and abnormal vascular development. Thus, the figure demonstrates how surgical removal alone may provide an initial reduction in tumor burden but may not be sufficient to prevent subsequent tumor regrowth.

\begin{figure}[hbt!]
	\centering
	\includegraphics[height=12cm,width=17cm]{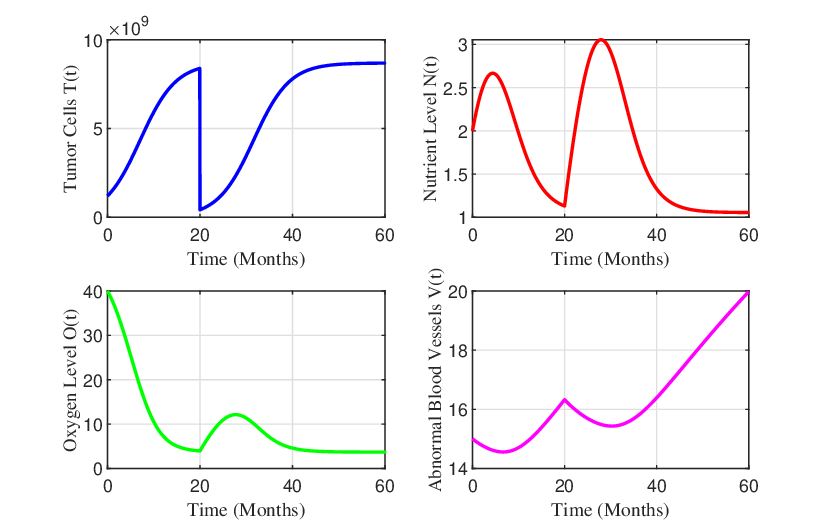}
	\caption{Effect of different radiosensitivity levels on tumor cell dynamics following surgery and radiotherapy.}
	\label{fig:only_surgery}
\end{figure} 

Furthermore, the destruction rate of the remaining tumor cells after surgery due to radiotherapy is not fixed. It depends on the radiosensitivity of the residual tumor cells after surgery. The radiosensitivity may vary depending on the stage at which the tumor is treated and the characteristics of the tumor. For example, in a solid tumor, severe hypoxia within the tumor microenvironment may reduce the effectiveness of radiotherapy. Therefore, in our model, different cases have been illustrated by varying the radiosensitivity of the remaining tumor cells to investigate how changes in radiotherapy sensitivity affect the treatment outcome and overall survival.

To investigate the effect of radiotherapy sensitivity, different radiosensitivity values were considered, as shown in Figure \ref{fig:surgery_with_radiotherapy} to observe the changes in the tumor microenvironment. The results demonstrate that the tumor dynamics vary considerably with changes in radiosensitivity. When the radiosensitivity is relatively low, such as $10\%$ or $20\%$, the residual tumor cells remaining after surgery show substantial regrowth over time, which is characterized as local recurrence of ACC. In contrast, higher radiosensitivity, such as $40\%$, $50\%$, and $70\%$, results in a much greater reduction and better control of the residual tumor burden. In particular, at higher radiosensitivity, the remaining tumor cells are more effectively suppressed following radiotherapy, resulting in limited tumor regrowth. These results indicate that the effectiveness of radiotherapy is strongly influenced by the radiosensitivity of the residual tumor cells.

\begin{figure}[hbt!]
	\centering
\includegraphics[height=15cm,width=18cm]{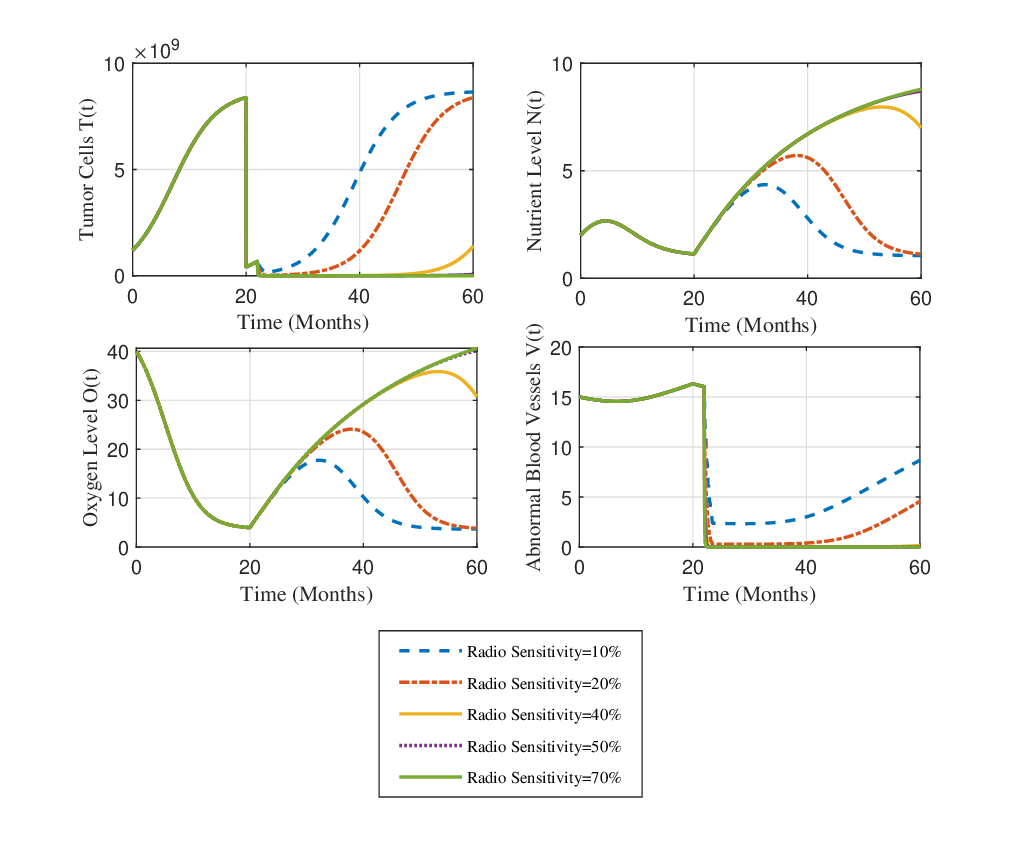}
	\caption{Effect of different radiosensitivity levels on tumor cell dynamics following surgery and radiotherapy.}
	\label{fig:surgery_with_radiotherapy}
\end{figure} 

Additionally, the nutrient and oxygen levels also exhibit different patterns under different radiosensitivity conditions. Following surgery and radiotherapy, these levels initially increase as the tumor burden decreases and the microenvironment begins to recover. However, under lower radiosensitivity, subsequent tumor regrowth is accompanied by a decline in oxygen and nutrient availability, whereas higher radiosensitivity generally maintains better oxygen and nutrient levels over the considered period. Similarly, the abnormal blood vessels decrease sharply following treatment. At lower radiosensitivity, the abnormal vasculature gradually develops again as the residual tumor regrows, while at higher radiosensitivity it remains substantially suppressed. Overall, these results suggest that radiosensitivity influences not only residual tumor growth but also the recovery and subsequent dynamics of the tumor microenvironment.
\begin{center}
  \begin{table}[hbt!]
   \caption{Toxicity of Tumor cells for different values of Radiosensitivity}

    \begin{tabular}{p{3cm} p{4.0cm} p{6.0cm}}
      \hline
      \textbf{Radiosensitivity} & \textbf{Overall Toxicity of Tumor cells} & \textbf{Toxicity of Tumor cells at the end of 60 months}\\
      \hline    
$0\%$ (Only surgery) & $3.5355 \times 10^{11}$ & $8.68283 \times 10^{9}$ \\

$10\%$ & $2.9090 \times 10^{11}$ & $8.6477 \times 10^{9}$ \\

$20\%$ & $2.2255 \times 10^{11}$ & $8.38539 \times 10^{9}$ \\

$40\%$ & $1.1523 \times 10^{11}$ & $1.3801 \times 10^{9}$ \\

$50\%$ & $1.0976 \times 10^{11}$ & $6.64597 \times 10^{7}$ \\

$70\%$ & $1.0948 \times 10^{11}$ & $9271.1$ \\
      \hline
    \end{tabular}
    \label{table:Toxicity}
    \end{table}
  \end{center}
Table \ref{table:Toxicity} shows that tumor cell toxicity differs depending on the various values of radiosensitivity of residual tumor cells remaining after surgery. Higher radiosensitivity results in greater tumor cell destruction and consequently reduces the residual tumor burden. The values presented in the table also demonstrate the effectiveness of postoperative radiotherapy after 60 months, with higher radiosensitivity leading to substantially lower tumor cell populations. In contrast, lower radiosensitivity may result in poor response to radiotherapy and continued tumor progression, which may adversely affect patient survival. The lowest toxicity is observed in the case of $70\%$ radiosensitivity. So, it can be concluded that for better treatment outcomes, surgery should ideally be performed while the residual tumor cells retain relatively high radiosensitivity, before the tumor develops severe hypoxia or hypovascularity. These unfavorable microenvironmental conditions can make the tumor more solid and reduce the radiosensitivity of the remaining tumor cells.
\\
The effect of receiving no postoperative radiotherapy after surgery has already been demonstrated previously. We now consider another possible clinical scenario in which the patient starts radiotherapy but does not maintain the prescribed treatment schedule. For example, the patient may complete the first five days of treatment but subsequently interrupt or discontinue the remaining treatment cycles. Such interruptions may reduce the overall effectiveness of radiotherapy and increase the toxicity of tumor cells. Therefore, we investigate different treatment-interruption scenarios to examine their effects on tumor control and the overall treatment outcome.
\begin{table}[htbp]
\centering
\small
\caption{Overall toxicity and toxicity at the end of 60 months for different radiosensitivity values and skipped doses}
\vspace{2mm}
\begin{tabular}{lcccccc}
\toprule
\diagbox[width=3cm,height=1.2cm]{\raisebox{-2pt}{\scriptsize Skipped time}}{\raisebox{2pt}{\scriptsize Radiosensitivity}} 
& \multicolumn{2}{c}{\textbf{70\%}} 
& \multicolumn{2}{c}{\textbf{40\%}} 
& \multicolumn{2}{c}{\textbf{10\%}} \\
\cmidrule(lr){2-3} \cmidrule(lr){4-5} \cmidrule(lr){6-7}
& \textbf{Overall} & \textbf{Toxicity at} & \textbf{Overall} & \textbf{Toxicity at} & \textbf{Overall} & \textbf{Toxicity at} \\
& \textbf{Toxicity} & \textbf{the end} & \textbf{Toxicity} & \textbf{the end} & \textbf{Toxicity} & \textbf{the end} \\
\midrule
Skip 11 days after dose-1 & $1.0948 \times 10^{11}$ & 313467 & $1.2977 \times 10^{11}$ & $3.9678 \times 10^{9}$ & $3.0127 \times 10^{11}$ & $8.65847 \times 10^{9}$ \\
Skip 19 days after dose-1 & $1.0953 \times 10^{11}$ & $1.05855 \times 10^{7}$ & $1.6115 \times 10^{11}$ & $6.8348 \times 10^{9}$ & $3.1168 \times 10^{11}$ & $8.66646 \times 10^{9}$ \\
\addlinespace
Skip 11 days after dose-2 & $1.0948 \times 10^{11}$ & 313509 & $1.2975 \times 10^{11}$ & $3.97016 \times 10^{9}$ & $3.0128 \times 10^{11}$ & $8.65851 \times 10^{9}$ \\
Skip 19 days after dose-2 & $1.0952 \times 10^{11}$ & $1.05855 \times 10^{7}$ & $1.6114 \times 10^{11}$ & $6.83804 \times 10^{9}$ & $3.1168 \times 10^{11}$ & $8.66651 \times 10^{9}$ \\
\addlinespace
Skip 11 days after dose-3 & $1.0948 \times 10^{11}$ & 313495 & $1.2975 \times 10^{11}$ & $3.9706 \times 10^{9}$ & $3.0128 \times 10^{11}$ & $8.65854 \times 10^{9}$ \\
Skip 19 days after dose-3 & $1.0952 \times 10^{11}$ & $1.05876 \times 10^{7}$ & $1.6144 \times 10^{11}$ & $6.83876 \times 10^{9}$ & $3.1168 \times 10^{11}$ & $8.66656 \times 10^{9}$ \\
\addlinespace
Skip 11 days after dose-4 & $1.0948 \times 10^{11}$ & 313479 & $1.2975 \times 10^{11}$ & $3.9707 \times 10^{9}$ & $3.0128 \times 10^{11}$ & $8.65656 \times 10^{9}$ \\
\bottomrule
\end{tabular}
\label{Toxicity_Table}
\end{table}

Table \ref{Toxicity_Table} illustrates the effect of interrupting the prescribed radiotherapy schedule at different stages of treatment. For a $70\%$ radiosensitivity, when radiotherapy is interrupted for 11 days, the tumor burden at the end of 60 months remains relatively low, with values of $313467$, $313509$, $313495$, and $313479$ when the interruption occurs after Dose--1, Dose--2, Dose--3, and Dose--4, respectively. These values are very close to each other, indicating that an 11-day interruption produces a relatively similar long-term response regardless of which dose is followed by the interruption.

In contrast, when the treatment gap is extended to 19 days, the tumor burden at 60 months increases substantially. For example, after Dose--1, it increases from $3.13\times10^{5}$ for an 11-day interruption to approximately $1.06\times10^{7}$ for a 19-day interruption. Similar increases are observed when the 19-day interruption occurs after Dose--2 or Dose--3. Therefore, the results indicate that a prolonged interruption of radiotherapy has a much greater effect on long-term tumor control than a shorter interruption. This suggests that maintaining the prescribed treatment schedule is important for achieving sustained suppression of the residual tumor cells.

\section{Results and Discussion}
Adenoid Cystic Carcinoma of the lacrimal gland is a highly aggressive tumor. After a certain period of time, management and removal of this tumor become so challenging. One of the most immediate concerns about the tumor is controlling distant metastasis. To make it possible, the tumor has to be exterminated. This study develops a biologically grounded mathematical model incorporating Tumor cells, Oxygen level, Nutrient level and Angiogenesis. The model illustrates key mechanisms of tumor microenvironment, such as oxygen and nutrient-dependent tumor proliferation and hypoxia-induced angiogenesis.\\

At first, in our analytical analyses such as positivity and boundedness, Existence and Uniqueness analyses reveal that our model’s solution exists and is both positive and bounded. In addition, Future status of the tumor represents whether tumor cells will exist or not. Thereafter, we have tested if our model is stable or not both locally and globally and the conditions of stability. Finally, we have performed the convergence test of Tumor cells, keeping Oxygen and nutrients constant. Then observed that the availability of these resources helps the tumor to expand.

Then, we performed numerical simulations and observed that, as the tumor cells increase, the oxygen and nutrient levels gradually decrease due to the increasing demand of the tumor. Because of this resource limitation, the tumor growth may become stabilized after a certain period of time. However, under severe hypoxia and nutrient deprivation, the tumor gradually develops a more solid and unfavorable microenvironment with irregular vasculature structure, making it more difficult to control.

We also investigated the effects of different parameters on the overall system. For example, the natural growth rate of tumor cells, $\alpha$, and the oxygen consumption rate of tumor cells, $\eta_2$, were found to have noticeable effects on the tumor dynamics. Moreover, we identified some important parameters that can cause qualitative changes in the whole system. In particular, $\alpha$ and $\sigma_1$, where $\sigma_1$ represents the natural regression rate of abnormal blood vessels, have critical values that can determine whether the tumor persists or eventually disappears.

Finally, because controlling this tumor is challenging, we investigated the effects of eye-sparing surgery followed by postoperative radiotherapy. Our results show that tumor recurrence after surgery depends strongly on the radiosensitivity of the residual tumor cells and on whether the prescribed radiotherapy schedule is properly maintained. Among the considered radiosensitivity levels, higher radiosensitivity, particularly 70\%, resulted in much better tumor control, with a very low tumor burden observed over the considered period. We also investigated treatment interruption period and found that increasing the treatment gap from 11 days to 19 days caused a substantial increase in the tumor burden at the end of 60 months.

Overall, these results suggest that surgery alone may not be sufficient to prevent tumor regrowth when residual tumor cells remain. Postoperative radiotherapy can play an important role in controlling these residual cells, but its effectiveness depends on the radiosensitivity of the tumor and maintaining the prescribed treatment schedule.\\

As a part of future research, optimal control strategies can be introduced into the proposed model to investigate optimal therapeutic strategies.\\

\begin{center}
  \begin{table}[hbt!]
   \caption{Comparison of related studies and the present work, highlighting its novelty}

    \begin{tabular}{p{1.8cm} p{2.2cm} p{3.0cm} p{6.0cm}}
      \hline
      \textbf{Study} & \textbf{Methodology} &\textbf{Focus Area} & \textbf{Novelty} \\
      \hline
Dillon et al. \cite{Dillon2015} 
& Molecular genetic analysis 
& Genetic and chromosomal alterations of ACC 
& Identified the chromosomal translocation (6:9) leading to MYB--NFIB fusion and highlighted MYB dysregulation as an important molecular feature of ACC. 
\\

Lin et al. \cite{Lin2015} 
& Molecular overview 
& Clinical and biological behavior of ACC 
& Described ACC as a slow-growing but highly infiltrative tumor with a tendency for intracranial invasion and poor prognosis. Developed the LACC-1 cell line to investigate the growth of ACC in vitro. 
\\

Kozal and Krześlak \cite{Kozal2022} 
& Molecular review 
& Effect of hypoxia in tumor microenvironment 
& Reported that hypoxia can promote immune suppression, metabolic adaptation, angiogenesis, cell proliferation, and metastasis, highlighting its important role in tumor progression. 
\\

Ghaffari et al. \cite{Ghaffari2015}
& Molecular overview 
& Tumor metabolism and nutrients 
& Demonstrated that tumor cells can reprogram metabolic pathways to support proliferation, survival, and metastasis, with these processes being strongly influenced by nutrient availability. 
\\

Chiche et al. \cite{Rouleau2010}
& Clinical analysis 
& Hypoxia-induced chemoresistance 
& Showed that hypoxia-induced mitochondrial remodeling is associated with chemotherapy resistance and that restoring mitochondrial function can increase drug sensitivity. 
\\

Present study 
& Nonlinear ODE-based dynamical system 
& Microenvironmental dynamics and treatment response in LGACC 
& Developed a mathematical model to investigate the interactions among tumor growth, oxygen, nutrients and abnormal blood vessels in lacrimal gland ACC. The model further examines how microenvironmental conditions influence tumor progression and the response of residual tumor cells to postoperative radiotherapy. 
\\
\hline
    \end{tabular}
    \label{table:Novelty of current work}
    \end{table}
  \end{center}
  
\section*{Data Availability Statement}
The data used to support the findings of this study are included in the article.
\section*{Conflicts of Interest}
The authors declare that they have no known financial or interpersonal conflicts that would have caused it to have an impact on the study described in this article.
\section*{Funding}
The research did not receive any specific funding.






\bibliography{sn-bibliography}

\end{document}